
\documentclass[12pt,a4paper]{article}
\usepackage{amssymb}
\usepackage{amsfonts}
\usepackage[onehalfspacing]{setspace}


\newtheorem{definition}{Definition}

\newtheorem{lemma}{Lemma}

\newtheorem{proposition}{Proposition}

\newenvironment{proof}[1][Proof]{\noindent\textbf{#1.} }{\ \rule{0.5em}{0.5em}}
\input{tcilatex}
\begin{document}

\title{Behavioral Causal Inference\thanks{%
Financial support by ISF grant no. 320/21 and the Foerder Institute is
gratefully acknowledged. I thank Nathan Hancart and Heidi Thysen, as well as
seminar participants at EUI, Stanford, Berkeley, UCSB and Caltech, for
helpful comments.}}
\author{Ran Spiegler\thanks{%
Tel Aviv University and University College London}}
\maketitle

\begin{abstract}
When inferring the causal effect of one variable on another from
correlational data, a common practice by professional researchers as well as
lay decision makers is to control for some set of exogenous confounding
variables. Choosing an inappropriate set of control variables can lead to
erroneous causal inferences. This paper presents a model of lay decision
makers who use long-run observational data to learn the causal effect of
their actions on a payoff-relevant outcome. Different types of decision
makers use different sets of control variables. I obtain upper bounds on the
equilibrium welfare loss due to wrong causal inferences, for various
families of data-generating processes. The bounds depend on the structure of
the type space. When types are \textquotedblleft ordered\textquotedblright\
in a certain sense, the equilibrium condition greatly reduces the cost of
wrong causal inference due to poor controls.\bigskip \bigskip \pagebreak
\end{abstract}

\section{Introduction}

Learning the causal effect of one variable on another from observational
data is an important economic activity. Indeed, applied economists do it for
a living. However, even lay decision makers regularly perform this activity
in order to evaluate the consequences of their actions. They obtain data
about observed correlations among variables (via first-hand experience or
through the media) and try to extract lessons from the data concerning the
consequences of their actions. For example, which college degree will
improve their long-run economic prospects? Will wearing surgical masks on
airplanes lower their chances of catching a virus? Will drinking more coffee
cause health problems?

There are two main differences between causal inference from observational
data as practiced by professional researchers and lay decision makers.
First, the researcher employs sophisticated inference methods that are
subjected to stringent scrutiny by other professionals. In contrast, lay
decision makers use intuitive, elementary methods, and they do not face any
pushback when they employ these methods inappropriately. The second
difference is that while the professional researcher is an outside observer,
lay decision makers interact with the economic system in question; the
aggregate behavior that results from their causal inferences can affect the
very correlations from which they draw their inferences. Thus, it is apt to
refer to the kind of causal inference that lay decision makers engage in as
\textquotedblleft behavioral\textquotedblright , in both senses of the word.

This paper is an attempt to model \textquotedblleft behavioral causal
inference\textquotedblright . I study a decision maker (DM) who faces a
binary choice between two actions, denoted $0$ and $1$. The DM's choice is
based on his belief regarding the action's causal effect on a binary
payoff-relevant outcome (which also takes the values $0$ or $1$). Using an
intuitive causal-inference method, the DM extracts this causal belief from
long-run correlational data about actions, outcomes and a collection of
exogenous variables. The data is generated by the behavior of other DMs in
similar situations. In equilibrium, the DMs' behavior is consistent with
best-replying to their causal belief.

The intuitive method of causal inference that the DM in my model employs is
very simple: Measuring the observed correlation between actions and
outcomes, while \textit{controlling} for some set of exogenous variables.
This is a basic and widespread procedure in scientific data analysis, but it
is based on a simple idea that lay people practice to some extent. For
example, when an agent decides whether to wear a surgical mask for
protection against viral infection, it is natural for him to look for
infection statistics about people in his own age group. Likewise, when a
student deciding whether to choose a STEM major tries to evaluate the
labor-market outcomes of STEM and non-STEM graduates, it is natural for him
to focus on people who share his highschool math background. In both cases,
when the agent consults data to estimate the consequences of various
actions, he tries to focus on data points that share his own characteristics
--- if he has access to such fine-grained data. This type of controlling
consists of \textit{conditioning} on the realization of some exogenous
variables.

Another type of controlling involves \textit{adjustment} rather than
conditioning. For example, in the above-mentioned surgical-mask example, the
agent may have access to data about the prevalence of certain genes and
their correlation with viral infection. Even if he does not know his own
relevant genetic background, he can nevertheless adjust his beliefs
according to the available data about the correlation between this variable
and others.

In general, suppose that long-run correlational data is given by some joint
probability distribution $p$ over actions $a$, outcomes $y$, and a
collection of exogenous variables is $x_{1},....,x_{K}$. The DM is able to
control for the variables indexed by $D\subseteq \{1,...,K\}$; he conditions
on a subset $C\subseteq D$, and adjusts for the variables in $D\setminus C$.
The DM's estimated causal effect of $a$ on $y$ is given by the formula%
\begin{equation}
\dsum\limits_{x_{D\setminus C}}p(x_{D\setminus C}\mid x_{C})\left[ p(y=1\mid
a=1,x_{D})-p(y=1\mid a=0,x_{D})\right]  \label{formula}
\end{equation}%
When the set $D$ of control variables differs from the set that a outside
researcher would deem appropriate, the DM's causal inference can be wrong:
he may misread the causal meaning of observed correlations, and consequently
obtain a biased estimate of the causal effect of $a$ on $y$.

Erroneous causal inference due to \textquotedblleft bad (exogenous)
controls\textquotedblright\ may take various forms, which are easy to
illustrate with directed acyclic graphs (DAGs), following Pearl (2009). For
instance, suppose that in reality, $a$ has no causal effect on $y$ and that
every observed correlation between these variables is due to confounding by
the exogenous variable $x$. These objective causal relations are represented
by the DAG $a\leftarrow x\rightarrow y$. Given the observed joint
distribution $p$ over $a,x,y$, the proper measurement of the average causal
effect of $a$ on $y$ is given by the formula%
\[
\sum_{x}p(x)[p(y=1\mid a=1,x)-p(y=1\mid a=0,x)] 
\]%
This formula will correctly yield a zero causal effect. If, however, the DM
fails to control for $x$, he will regard $p(y=1\mid a=1)-p(y=1\mid a=0)$ as
the causal effect of $a$ on $y$ --- in other words, he will mistake
correlation for causation --- and potentially measure an erroneous, non-zero
effect.

Bad controls can also involve \textit{excessive }controlling for exogenous
variables. The following example is taken from Cinelli et al. (2022). The
true causal model is given by the DAG $a\leftarrow x_{1}\rightarrow
x_{2}\leftarrow x_{3}\rightarrow y$. Thus, as in the previous example, the
objective causal effect of $a$ on $y$ is null. However, in this case the
quantity $p(y=1\mid a=1)-p(y=1\mid a=0)$ is the correct formula for the
objective (null) causal effect. In other words, there is no need to control
for any of the $x$ variables. Suppose, however, that the DM adjusts for $%
x_{2}$. Then, his estimated causal effect will be%
\[
\sum_{x_{2}}p(x_{2})[p(y=1\mid a=1,x_{2})-p(y=1\mid a=0,x_{2})] 
\]%
In this case, the variable $x_{2}$ is a bad control, and the DM's estimate
can end up being non-null.

This paper poses the following question: What are the limits to the DM's
errors of causal inference due to bad controls, when the data-generating
process $p$ has to be consistent with \textit{equilibrium behavior} ---
i.e., when the DM's choice of actions given his information maximizes his
subjective expected payoff with respect to the belief he extracts from $p$
using his causal-inference procedure?

I study this question with a simple model, in which a DM chooses an action $%
a\in \{0,1\}$ after a collection of exogenous variables $t,x_{1},...,x_{K}$
is realized, where $t\in \{0,1\}$ is the DM's preference type. The DM's vNM
utility function is $u(a,t,y)=y-c\cdot \mathbf{1}[a\neq t]$. Thus, the DM
will only choose $a\neq t$ if he thinks that $a$ has a causal effect on $y$.
In the baseline model, however, I assume that the objective causal effect of 
$a$ on $y$ is null: $y$ is determined only by the exogenous variables
according to some conditional probability distribution (I relax this
assumption in Section 5).

The DM's control variables are given by a \textquotedblleft data
type\textquotedblright\ (drawn randomly from some given set $N$,
independently of his preference type), which is defined by a distinct pair
of subsets $(D,C)$, where: $C\subseteq D\subseteq \{1,...,K\}$; $C$
represents the variables the type conditions on, and $D\setminus C$
represents the variables he only adjusts for, leading to an estimated causal
effect of $a$ on $y$ (given $x$) as described by (\ref{formula}). The
formula is evaluated according to a joint distribution over all variables.
The DM observes the realization of $t$, but he has no long-run data about $t$
and therefore does not use it for causal estimates. In equilibrium, the
distribution of $a$ conditional on the exogenous variables is consistent
with each DM type best-replying to his causal belief. (Section 6 explains
how this concept can be recast in earlier frameworks of equilibrium modeling
with non-rational expectations, due to Jehiel (2005), Spiegler (2016,2020)
and Esponda and Pouzo (2016).)

The basic insight of this paper is that this equilibrium condition can
restrict the magnitude of the DM's welfare loss due to errors of causal
inference. These errors consist of misreading the causal component of
observed correlational patterns. When agents act on these errors, they
change these very patterns, and hence the causal effects they deduce from
them.\bigskip

\noindent \textit{Example 1.1}

\noindent The above pair of examples of \textquotedblleft bad
controls\textquotedblright\ offer an extreme illustration of this insight.
Suppose that $t=0$ with certainty --- i.e., there are no preference shocks.
In the first example, the single exogenous variable $x$ which causes $y$ is
also the sole direct cause of $a$. For the latter causal relation to be
non-null, it must be the case that some DM data types condition their action
on $x$. However, since these types correctly control for $x$, they also
correctly measure the null objective causal effect of $a$ on $y$. Since $t=0$
for sure, these types will play $a=0$ with certainty. By definition, the
same lack of variation of $a$ with $x$ extends to the types who cannot
condition their action on $x$. It follows that no DM type will vary his
action with $x$, which destroys the confounding effect of $x$, and therefore
any causal error due to failure to control for $x$. This means that in
equilibrium, the DM will not incur any welfare loss due to poor causal
inference.

The same reasoning applies to the second example, which involves three
exogenous $x$ variables. If a DM data type conditions on $x_{1}$, his causal
inference is sound and therefore his action is constant (since there are no
preference shocks); whereas if he does not condition on $x_{1}$, his
behavior is independent of $x_{1}$ by definition. Thus, no DM type varies
his behavior with $x_{1}$, such that the link $a\leftarrow x_{1}$ that makes 
$x_{2}$ a \textquotedblleft bad control\textquotedblright\ is effectively
severed. $\square $\bigskip 

The main results in this paper --- presented in Sections 3 and 4 --- explore
the generality of this observation. I examine various families of joint
distributions over $t,x_{1},...,x_{K},y$, and characterize the upper bound
on the DM's equilibrium welfare loss relative to the rational-expectations
strategy $a\equiv t$. Since $a$ has no causal effect on $y$ in the baseline
model, the welfare loss is simply $c$ times the probability of playing $%
a\neq t$.

It turns out that a simple binary relation over the set $N$ of possible data
types is critical for this upper bound. Say that one type $(C,D)$ dominates
another $(C^{\prime },D^{\prime })$ if $D\supseteq C^{\prime }$ (i.e., the
former type's set of control variables contains the latter type's set of
conditioning variables). When $t$ is constant, the upper bound is $0$ when
the domination relation over $N$ is complete and quasitransitive, and $c$
when it is not.\footnote{%
A binary relation is quasitransitive if its asymmetric part is transitive
(following Sen (1969)).} Thus, when data types are ordered in a particular
sense, the equilibrium condition eliminates all welfare loss due to causal
errors. Conversely, when data types are not ordered, the upper bound on the
DM's welfare loss is the same as when we do not impose any restriction on
the conditional action distribution. The former case fits situations in
which DM types are \textquotedblleft vertically\textquotedblright\
differentiated, (roughly) in the sense that some types control for more
variables than others. The latter case fits \textquotedblleft
horizontal\textquotedblright\ differentiation, in the sense that different
types control for different variables.

I obtain partial characterization results when there is variation in $t$ and 
$D=C$ for all data types. When the domination relation is complete and the $y
$ is purely a function of $t$, the upper bound on the DM's equilibrium
welfare loss is $\Pr (t=1)\cdot \Pr (t=0)$. When the relation is incomplete,
the upper bound is at least $\max \{\Pr (t=1),\Pr (t=0)\}$. When in addition
there is no restriction on the conditional outcome distribution, the upper
bound is $1$. Once again, the structure of the domination relation over the
set of possible data types plays a key role in how equilibrium forces
constrain the cost of flawed causal inference.\footnote{%
Spiegler (2022) presents an example of how equilibrium forces can restrict
the cost of committing a reverse-causality misperception.}

\section{A Model}

Let $a$, $t$ and $y$ be three binary variables that take values in $\{0,1\}$%
, where: $a$ is an \textit{action} that a decision maker (DM) chooses; $y$
is an \textit{outcome}; and $t$ is the DM's \textit{preference type}. Let $%
x=(x_{1},...,x_{K})$ be a collection of additional exogenous variables,
which are realized jointly with $t$, and prior to the realization of $a$ and 
$y$. Let $A=\{0,1\}$ denote the set of values that $a$ can take. Let $X_{k}$
be the set of values that the variable $x_{k}$ can take. For every $M\subset
\{1,...,K\}$, denote $x_{M}=(x_{k})_{k\in M}$ and $X_{M}=\times _{k\in
M}X_{k}$.

I assume that $x$ and $t$ are the sole potential causes of $y$ --- i.e., $a$
has \textit{no causal effect} on $y$. This assumption is made for
expositional clarity; I will relax it in Section 5.

The DM is a subjective expected utility maximizer, whose vNM utility
function is%
\[
u(t,a,y)=y-c\cdot \mathbf{1}[a\neq t] 
\]%
where $c\in (0,1)$ is a constant. Thus, the DM has an intrinsic motive to
match his action to his preference type; he will choose $a\neq t$ only if he
believes that this increases the probability of the outcome $y=1$. If the DM
understood that $a$ has no causal effect on $y$, he would always choose $a=t$%
.

The DM's \textit{data type} is defined by a pair $(C,D)$, where $C\subseteq
D\subseteq \{1,...,K\}$. The interpretation is that $C$ defines the set of $x
$ variables that the type can condition on, because he observes their
realization before taking an action; and $D$ is the set of exogenous
variables about which he has long-run data (note that $t$ is never among
these variables). There are $n$ data types. Denote $N=\{1,...,n\}$. Each
data type $i\in N$ is associated with a distinct pair $(C_{i},D_{i})$. We
say that type $i$ is $simple$ if $C_{i}=D_{i}$ --- i.e., the DM only has
long-run data about the variables he conditions on. Let $\lambda \in \Delta
(N)$ be a prior distribution over data types. This distribution is \textit{%
independent} of all other variables. A strategy for type $(t,i)$ is a
function $\sigma _{t,i}:X\rightarrow \Delta (A)$. By definition, this
strategy is measurable with respect to $X_{C_{i}}$.

Let $p$ be a joint long-run probability distribution over $t,x,a,y$. Denote $%
\gamma =p(t=1)$. The assumption that $a$ has no causal effect on $y$ means
that $p$ satisfies the conditional-independence property $y\perp a\mid (t,x)$%
.\footnote{%
Throughout the paper, I use the symbol $\perp $ to denote statistical
independence.} The distribution $p$ can thus be factorized as follows:%
\[
p(t,x,a,y)=p(t,x)p(a\mid t,x)p(y\mid t,x) 
\]%
where the term $p(a\mid t,x)$ represents the DM's average behavior across
data types:%
\[
p(a\mid t,x)=\sum_{i\in N}\lambda _{i}\sigma _{t,i}(a\mid t,x_{C_{i}}) 
\]%
This term is endogenous, whereas $p(t,x)$ and $p(y\mid t,x)$ are exogenous.

I assume that a DM of data type $i$ forms the following belief regarding the
causal effect of $a$ on $y$ given his observation of $x_{C_{i}}$:%
\begin{equation}
\tilde{p}_{i}(y\mid x_{C_{i}},do(a))=\sum_{x_{D_{i}\backslash
C_{i}}}p(x_{D_{i}}\mid x_{C_{i}})p(y\mid a,x_{D_{i}})
\label{subjective belief}
\end{equation}%
The $\tilde{p}$ notation indicates that this is a subjective belief, which
may be incorrect. The $do$ notation follows Pearl (2009). Its role here is
merely to indicate that (\ref{subjective belief}) is a causal quantity, to
be distinguished from purely probabilistic conditioning. The DM's attempt to
evaluate the causal effect of $a$ on $y$ impels him to \textit{control }for
every exogenous variable about which he has data. For some of these
variables (represented by $C_{i}$), he also learns their realization prior
to taking his action, and therefore he \textit{conditions} on them. For the
other variables (represented by $D_{i}\backslash C_{i}$), he has data about
their long-run correlation with $a$, $x_{C_{i}}$ and $y$, yet he does not
learn their realization prior to taking an action, and therefore he \textit{%
adjusts} his belief by summing over them.\bigskip

\begin{definition}
Data type $i$'s perceived causal effect of switching from $a=0$ to $a=1$
given $x$ is%
\[
\Delta _{i}(x)=\tilde{p}_{i}(y=1\mid x_{C_{i}},do(a=1))-\tilde{p}%
_{i}(y=1\mid x_{C_{i}},do(a=0))\smallskip 
\]
\end{definition}

Plugging (\ref{subjective belief}) into this definition, we obtain:

\begin{equation}
\Delta _{i}(x)=\sum_{x_{D_{i}\setminus C_{i}}}p(x_{D_{i}\setminus C_{i}}\mid
x_{C_{i}})[p(y=1\mid a=1,x_{D_{i}})-p(y=1\mid a=0,x_{D_{i}})]
\label{delta i}
\end{equation}%
This formula will serve us throughout this paper.

If the DM had long-run data about all exogenous variables (including $t$),
then he could control for all of them. Doing so, he would correctly infer a
null causal effect of $a$ on $y$. In contrast, the DM in this model may end
up believing that $a$ has a non-zero causal effect on $y$ because he fails
to control for all the exogenous variables. In this case, he misinterprets
part of the correlation between $a$ and $y$ as a causal effect, whereas in
reality this correlation is entirely due to confounding by $t,x$.

The preceding paragraph may give the impression that the only case of
\textquotedblleft bad controls\textquotedblright\ that the model captures is 
\textit{insufficient} controls. However, note that while controlling for all 
$K+1$ exogenous variables is always correct, it is possible that a strict
subset of these variables is a sufficient set of controls. In this case,
controlling for additional variables may induce errors, as in the example by
Cinelli et al. (2022) described in the Introduction. Thus, the present model
allows for both insufficient and excessive controlling. However, the model
does not accommodate variables that are caused by $a$ or $y$ as possible
controls --- it only focuses on so-called \textquotedblleft
pre-treatment\textquotedblright\ variables.\bigskip

\begin{definition}
Let $\varepsilon >0$. A strategy profile $\sigma =(\sigma _{1},...,\sigma
_{n})$ is an $\varepsilon $-equilibrium if for every $i=1,...,n$ and every $%
t,x,a^{\prime }$, $\sigma _{i}(a^{\prime }\mid t,x)>\varepsilon $ only if%
\[
a^{\prime }\in \arg \max_{a}\sum \tilde{p}_{i}(y\mid
x_{C_{i}},do(a))u(t,a,y) 
\]%
An equilibrium is a limit of a sequence of $\varepsilon $-equilibria for $%
\varepsilon \rightarrow 0$.\bigskip
\end{definition}

The trembling-hand aspect of the equilibrium concept will not play a role in
the characterization results of Section 3-4, except for one result.

The structure of $u$ means that in equilibrium, type $i$ will play $a\neq t$
with positive probability at $x$ only if%
\[
\left\vert \Delta _{i}(x)\right\vert \geq c 
\]%
Since $a$ has no causal effect on $y$, playing $a\neq t$ yields a welfare
loss.\bigskip

\begin{definition}[Expected welfare loss]
Given a strategy profile $\sigma $, the DM's expected welfare loss is%
\[
c\sum_{t,x}p(t,x)\dsum\limits_{i\in N}\lambda _{i}\sigma _{i}(a=1-t\mid
t,x)\smallskip 
\]
\end{definition}

Since the welfare loss is proportional to $c$, I will subsequently fix $c$
and quantify the welfare loss by the probability of playing $a\neq t$. My
main task in the next sections will be to derive upper bounds on this
quantity when $\sigma $ is required to be an equilibrium. Without this
equilibrium condition, the upper bound is $1$. To see why, suppose that $t=0$
with certainty, and that $x\in \{0,1\}$. Assume $y=x$ with probability one
for every $x$, and consider the strategy $\sigma $ that prescribes $a=x$
with probability one. Then, by definition, the probability of error is one.
However, the strategy $\sigma $ is inconsistent with equilibrium. For the DM
to vary $a$ with $x$, he must be able to \textit{condition} on $x$ --- i.e., 
$C_{i}\neq \emptyset $. But this means the DM correctly \textit{controls}
for $x$ when estimating the causal effect of $a$ on $y$, which means that he
correctly estimates it to be zero, contradicting the assumption that he
plays $a\neq t$ for some realization of $x$. It follows that the requirement
that $\sigma $ is an equilibrium strategy can have bite.\bigskip

\noindent \textit{Comment: Why does }$C\subseteq D$? The assumption that $%
C\subseteq D$ means that if a DM conditions on a variable, he must have
long-run data about it. In principle, one can easily imagine situations in
which agents know the realization of a variable without having data about
the long-run behavior of this variable. For instance, the DM may know his
height but lack access the statistics about how height is correlated with
the outcome of interest. In the absence of such data, the DM cannot make use
of his height information, and therefore, we might as well assume that he
lacks it. This is the justification for the assumption that $C\subseteq D$.
Note that the DM knows the realization of $t$, and he makes use of this
information to calculate his utility, but this does not require access to
any long-run statistical data.\bigskip 

\noindent \textit{Comment: A \textquotedblleft persuasion\textquotedblright\
interpretation}. The worst-case analysis can be interpreted through the
prism of the small literature on persuading boundedly rational agents (e.g.,
Glazer and Rubinstein (2012), Galperti (2019), Hagenbach and Koessler
(2020), Schwartzstein and Sunderam (2021), Eliaz et al. (2021), and De
Barreda et al. (2022)). Under this interpretation, the DM is the receiver
who takes an action. The sender's objective is to maximize the probability
that the receiver plays $a\neq t$. Toward this end, he designs two features
of the receiver's environment. The conventional feature is a distribution
over the receiver's signals. The less conventional feature (but one that is
closer in spirit to Eliaz et al. (2021)) involves the long-run statistical
data to which the receiver has access, according to which he forms his
beliefs. Worst-case analysis can thus be viewed as finding the sender's
optimal data provision strategy.

\section{Analysis: Homogenous Preferences}

In this section I characterize the maximal welfare loss that is consistent
with equilibrium behavior, when there is no variation in the DM's
preferences. Specifically, assume that $t=0$ with probability one, such that
the DM's expected welfare loss is simply $c$ times the ex-ante probability
that he plays $a=1$. I show that the upper bound on this probability depends
on a simple property of the set of possible data types.

In this environment of preference homogeneity, the only potential source of
variation in the DM's behavior is the way the various types condition their
actions on $x$. Therefore, for any set $N$ of data types, there is an
equilibrium in which the DM plays $a=0$ with probability one. To see why,
construct the following sequence of perturbations around this strategy: for
every $\varepsilon \in (0,\frac{1}{2})$, every data type $i$ plays $a=1$
with probability $\varepsilon $, independently of $x_{C_{i}}$. By
construction, $a\perp x$ under this strategy profile, and therefore $\Delta
_{i}(x)=0$ for every type $i$, such that $a=0$ is the type's unique
best-reply. The question is whether there are additional equilibria, in
which the DM commits an error with positive probability, and how large this
probability can get. The following example serves to illustrate this problem.%
$\bigskip $

\noindent \textit{Example 3.1}

\noindent Let $K=2$. The two exogenous variables $x_{1}$ and $x_{2}$ take
values in $\{0,1\}$, and their joint distribution satisfies:%
\begin{eqnarray*}
p(x_{1} &=&1)=p(x_{2}=1)=\beta \in (0,1)\smallskip \\
p(x_{2} &=&1\mid x_{1}=1)=p(x_{1}=1\mid x_{2}=1)=q\in \lbrack \frac{1}{2}%
,1)\smallskip \\
p(y &=&1\mid x_{1},x_{2})=x_{1}x_{2}\text{ for every }x_{1},x_{2}\text{.}
\end{eqnarray*}%
Let $n=2$, $\lambda _{1}=\lambda _{2}=\frac{1}{2}$, where $C_{i}=D_{i}=\{i\}$%
. That is, each type conditions his action on a different aspect of $x$.

The following is an interpretation of this specification. The DM is an
executive who chooses a business strategy for a company whose environment is
defined by financial and technological factors (represented by $x_{1}$ and $%
x_{2}$). The company is profitable if both factors are favorable. The
executive's decision is informed by an analyst's report. There are two types
of analysts, who specialize in (and therefore monitor) the technological and
financial environments, respectively.

Suppose that each type $i=1,2$ always plays $a_{i}=x_{i}$. Let us examine
whether this strategy profile is an equilibrium. Begin by calculating type $%
1 $'s subjective estimate of actions' causal effect on profits, given his
information. First, observe that since $y=x_{1}x_{2}$ independently of $a$,%
\begin{eqnarray*}
p(y &=&1\mid a,x_{1}=1)=p(x_{2}=1\mid a,x_{1}=1) \\
p(y &=&1\mid a,x_{1}=0)=0
\end{eqnarray*}%
for every $a$. (Note that these quantities never involve conditioning on a
zero-probability event. For example, the combination $a=0,x_{1}=1$ occurs
when $x_{2}=0$ and the DM is of type $2$.) Therefore, we only need to
calculate the following conditional probabilities, which also make use of
the DM's postulated strategy:%
\begin{eqnarray*}
p(x_{2} &=&1\mid a=1,x_{1}=1)=\frac{q}{q+\frac{1}{2}(1-q)}\medskip \\
p(x_{2} &=&1\mid a=0,x_{1}=1)=0
\end{eqnarray*}%
It follows that%
\[
\Delta _{1}(x_{1}=1)=\frac{q}{q+\frac{1}{2}(1-q)}-0=\frac{2q}{1+q} 
\]%
Therefore, if $2q/(1+q)>c$, type $1$ will prefer to play $a=1$ when $x_{1}=1$%
. In addition, we established that%
\[
\Delta _{1}(x_{1}=0)=0-0=0 
\]%
Therefore, type $1$ will prefer to play $a=0$ when $x_{1}=0$.

The same calculation applies to type $2$. It follows that as long as $%
q>c/(2-c)$, the postulated strategy profile is an equilibrium. The
equilibrium error probability (i.e., $\Pr (a=1)$) is $\beta $, which can be
arbitrarily close to one --- hence, the equilibrium welfare loss can be as
large as the non-equilibrium benchmark. Thus, unlike Example 1.1, here
equilibrium forces do not \textquotedblleft protect\textquotedblright\ DMs
from their errors of causal inference.

The intuition behind this result is that since each type conditions on a
different component of $x$, each creates a confounding effect that
\textquotedblleft fools\textquotedblright\ the other type. For example, type 
$1$ is vulnerable to interpreting the residual correlation between $a$ and $%
y $ after controlling for $x_{1}$ --- which exists because of type $2$'s
behavior --- as a causal effect. Note that the result does $not$ necessitate
correlation between $x_{1}$ and $x_{2}$. Indeed, even when $q=\frac{1}{2}$,
the above equilibrium can be sustained as long as $c<\frac{2}{3}$. The
reason is that although the DM types in this case condition their actions on
independent exogenous variables, their subjective causal estimates involve
conditioning on $a$. Since this variable is a common consequence of $x_{1}$
and $x_{2}$, conditioning on it creates correlation between otherwise
independent variables.

The equilibrium welfare loss is non-monotone with respect to the data types'
sets of control variables. For example, suppose that type $C_{1}=\{1\}$ and $%
C_{2}=D_{2}=\emptyset $ --- i.e., type $2$ now does not control for any
variable. By definition, he does not vary his action with $x$, and therefore 
$x_{2}$ is not a confounding variable. This means that type $1$ effectively
controls for any potential confounder, and therefore he will not commit any
error in equilibrium. This is similar to Example 1.1, where equilibrium
forces eliminated any welfare loss due to faulty causal inferences. $\square 
$\bigskip

The example shows that for some sets of data types, the equilibrium welfare
loss is zero, while for others, it can be as large as when we do not impose
any equilibrium restriction. The results in this section generalize this
lesson. They will make use of the following binary relation $P$ over data
types.\bigskip

\begin{definition}
For data types $i,j\in N$, $iPj$ if $D_{i}\supseteq C_{j}$.\bigskip
\end{definition}

The meaning of $iPj$ is that the set of variables that data type $i$
controls for (via conditioning or adjustment) is a weak superset of the set
of variables that type $j$ conditions on. Note that by our definition of
types, $P$ is reflexive, since $D_{i}\supseteq C_{i}$ for every $i\in N$.
Let $P^{\ast }$ be the asymmetric (strict) part of $P$ --- i.e., $iP^{\ast
}j $ if $iPj$ and $j\NEG{P}i$. Following Sen (1969), $P$ is \textit{%
quasitransitive} if $P^{\ast }$ is transitive.\bigskip

\begin{lemma}
\label{lemma}Suppose a binary relation $P$ over $N$ is complete and
quasitransitive. Then, $N$ can be partitioned into $L$ classes, $%
N_{1},...,N_{L}$, as follows: For every $\ell =1,...,L$,%
\[
N_{\ell }=\{i\notin \cup _{h<\ell }N_{h}\mid j\NEG{P}^{\ast }i\text{ for all 
}j\notin \cup _{h<\ell }N_{h}\} 
\]%
Moreover, for every $i\in N_{\ell }$, $iPj$ for all $j\in \cup _{h\geq \ell
}N_{h}$.
\end{lemma}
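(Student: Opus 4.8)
The plan is to recognize the construction as the iterated ``peeling off of $P^{\ast}$-maximal elements'' and to show that quasitransitivity is exactly what keeps this well-defined. First I would record the elementary properties of $P^{\ast}$: it is irreflexive (since $iP^{\ast}i$ would require both $iPi$ and $\neg iPi$), asymmetric by definition, and transitive by the quasitransitivity hypothesis. Hence $(N,P^{\ast})$ is a strict partial order on a finite set, so every nonempty subset $S\subseteq N$ contains a $P^{\ast}$-maximal element, i.e.\ an $i\in S$ with $j\NEG{P}^{\ast}i$ for all $j\in S$. This is the one place the hypothesis really bites: without transitivity of $P^{\ast}$ one could have a cycle $i_{1}P^{\ast}i_{2}P^{\ast}i_{3}P^{\ast}i_{1}$ and no maximal element at all, and the whole construction would collapse.

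Next I would define the classes recursively: set $S_{1}=N$ and, having defined $N_{1},\dots,N_{\ell-1}$, put $S_{\ell}=N\setminus\bigcup_{h<\ell}N_{h}$ and let $N_{\ell}$ be the set of $P^{\ast}$-maximal elements of $S_{\ell}$. Unwinding the definition of $S_{\ell}$, this is precisely the set $\{i\notin\bigcup_{h<\ell}N_{h}\mid j\NEG{P}^{\ast}i\text{ for all }j\notin\bigcup_{h<\ell}N_{h}\}$ displayed in the lemma. As long as $S_{\ell}\neq\emptyset$, the first paragraph guarantees $N_{\ell}\neq\emptyset$, so the remaining set strictly shrinks at each step; since $N$ is finite, the recursion terminates at some $L$ with $\bigcup_{h\le L}N_{h}=N$, and the $N_{\ell}$ are pairwise disjoint by construction. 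Thus $N_{1},\dots,N_{L}$ is a partition of $N$, and in particular $\bigcup_{h\ge\ell}N_{h}=S_{\ell}$ for every $\ell$.

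For the ``moreover'' part I would fix $i\in N_{\ell}$ and $j\in\bigcup_{h\ge\ell}N_{h}=S_{\ell}$ and argue $iPj$ directly: if $iPj$ failed, then completeness of $P$ would give $jPi$, hence $jP^{\ast}i$ with $j\in S_{\ell}$, contradicting that $i$ is $P^{\ast}$-maximal in $S_{\ell}$. This finishes the proof. The argument carries no computation; the only substantive step is the existence of a $P^{\ast}$-maximal element in every nonempty subset, which is where finiteness and quasitransitivity enter, so I expect that step (together with the bookkeeping that the recursively built classes coincide with the stated formula and genuinely exhaust $N$) to be the crux rather than any obstacle of difficulty.
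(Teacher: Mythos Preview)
Your proof is correct and follows essentially the same approach as the paper: both argue that $P^{\ast}$ is acyclic (you via its being a strict partial order, the paper simply asserting no cycles), peel off $P^{\ast}$-undominated elements layer by layer, and use completeness of $P$ to obtain the ``moreover'' clause. If anything, your treatment is slightly more explicit than the paper's, which states the $iPj$ property only for the top layer and leaves the inductive extension implicit.
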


\begin{proof}
By definition, $P^{\ast }$ does not contain cycles. Hence, the set of data
types $i\in N$ such that $j\NEG{P}^{\ast }i$ for all $j\in N$ (i.e., the set
of $P^{\ast }$-undominated data types) is non-empty. Define this set by $%
N_{1}$. Since $P$ is complete, $iPj$ for every $i\in N_{1}$ and every $j\in
N $. The other cells in the partition are defined inductively: After $%
N_{1},...,N_{\ell }$ are removed from $N$, let $N_{\ell +1}$ be the set of $%
P^{\ast }$-undominated types in the remaining set. Since none of the sets $%
N_{\ell }$ is empty, the procedure terminates after at most $n$
steps.\bigskip
\end{proof}

The lemma confirms that when $P$ is complete and quasitransitive, it
partitions $N$ into layers, such that the first (top) layer consists of all $%
P^{\ast }$-undominated, the second layer consists of all $P^{\ast }$%
-undominated element outside the first layer, and so forth.

When all data types are $simple$, the structure of $P$ is simplified: $iPj$
means $C_{i}\supseteq C_{j}$, hence $P$ is automatically transitive. The
relevant distinction in this case is thus between complete and incomplete $P$%
. Moreover, since I assumed that all data types are distinct, it follows
that for every pair of distinct types $i,j$, $iPj$ implies $iP^{\ast }j$.
Therefore, under simple data types, the requirement that $P$ is complete is
reduced to the requirement that $P^{\ast }$ is a \textit{linear ordering}.

The following results fully characterize the maximal equilibrium welfare
loss, as a function of $P$. The first result generalizes Example 1.1,
whereas the second result generalizes Example 3.1.\bigskip

\begin{proposition}
\label{prop complete}Let $\gamma =0$. Suppose that $P$ is complete and
quasitransitive. Then, the DM's expected welfare loss is zero in any
equilibrium.
\end{proposition}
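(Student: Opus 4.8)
The plan is to prove, by induction on the classes $N_1,\dots,N_L$ supplied by Lemma~\ref{lemma} (which applies since $P$ is complete and quasitransitive), that in every equilibrium each type $i\in N_\ell$ plays $a=0$ with probability one at every $x$ with $p(x)>0$. Because $\gamma=0$ gives $p(t,x)=\mathbf 1[t=0]\,p(x)$, the expected welfare loss reduces to $c\sum_x p(x)\sum_{i\in N}\lambda_i\,\sigma_i(a=1\mid 0,x)$, so establishing this claim for all $\ell$ yields the proposition at once. (Types with $\lambda_i=0$ contribute nothing to the loss and nothing to $p$, so one may assume all weights are positive.)

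For the inductive step, fix $\ell$ and suppose every type in $N_1\cup\dots\cup N_{\ell-1}$ plays $a=0$ with probability one on the support of $p$; then each such type's conditional action distribution is constant in $x$ over the support. Take $i\in N_\ell$. Type $i$'s action enters $p$ only through $p(a\mid t=0,x)=\sum_{j\in N}\lambda_j\,\sigma_j(a\mid 0,x_{C_j})$. The terms with $j$ in the lower classes are constant on the support by the inductive hypothesis, while the ``moreover'' clause of Lemma~\ref{lemma} gives $iPj$, i.e. $C_j\subseteq D_i$, for every $j\in N_\ell\cup\dots\cup N_L$, so those terms are functions of $x_{D_i}$ alone. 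Hence $p(a\mid t=0,x)$ depends on $x$ only through $x_{D_i}$ on the support — equivalently, $a$ is conditionally independent, given $x_{D_i}$, of the remaining $x$-coordinates. (No circularity: type $i$'s own term depends only on $x_{C_i}$, and $C_i\subseteq D_i$.)

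I then combine this with the baseline restriction $y\perp a\mid(t,x)$ and with $\gamma=0$: marginalizing the factorization $p(t=0,x,a,y)=p(x)\,p(a\mid 0,x)\,p(y\mid 0,x)$ over the $x$-coordinates outside $D_i$ shows that $p(y=1\mid a,x_{D_i})=p(y=1\mid x_{D_i})$ for every $a$ that has positive probability given $x_{D_i}$. Thus the bracketed term in formula~(\ref{delta i}) vanishes and $\Delta_i(x)=0$ for all relevant $x$; since $c>0$, the best-reply condition (type $i$ puts positive probability on $a\neq t$ at $x$ only if $|\Delta_i(x)|\ge c$) forces type $i$ to put zero probability on $a=1$ whenever $p(x)>0$. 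This completes the induction.

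I expect the only delicate point to be the treatment of zero-probability conditioning in formula~(\ref{delta i}): the quantity $p(y=1\mid a,x_{D_i})$ need not be defined when some action is never played at realizations with a given $x_{D_i}$. This is precisely the situation the trembling-hand component of the equilibrium concept is designed to handle — one runs the argument above along the $\varepsilon$-equilibria of the defining sequence, where the perturbations keep the relevant conditional probabilities positive (so $\Delta_i$ is well defined and equals $0$ on the support), and then passes to the limit $\varepsilon\to0$. Everything else, namely the class decomposition and the independence bookkeeping, is routine once Lemma~\ref{lemma} is in hand.
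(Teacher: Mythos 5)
Your proposal is correct and follows essentially the same route as the paper's own proof: induction along the partition of Lemma~\ref{lemma}, showing layer by layer that no type's behavior varies with $x$ outside $D_i$, hence $y\perp a\mid x_{D_i}$, $\Delta_i(x)=0$, and $a=0$ is forced. Your additional care about zero-probability conditioning via the $\varepsilon$-equilibrium sequence is a reasonable refinement of a point the paper passes over silently, but it does not change the argument.
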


\begin{proof}
I will show that $a=0$ with probability one in any equilibrium. The proof is
by induction with respect to the partition defined by Lemma \ref{lemma}.
Consider an arbitrary type $i$ in the top layer $N_{1}$. This type satisfies 
$D_{i}\supseteq C_{j}$ for all $j\in N$. Hence, there is no $x$ variable
outside $D_{i}$ that $any$ DM type conditions his action on. This means that 
$y\perp a\mid x_{D_{i}}$ --- i.e., $p(y=1\mid a,x_{D_{i}})=p(y=1\mid
x_{D_{i}})$. Therefore, $\Delta _{i}(x)=0$. It follows that in equilibrium,
type $i$ plays $a=0$ for all $x$.

Suppose the claim holds for all types in the top $m$ layers in the
partition, and now consider an arbitrary type $i$ in the $(m+1)$-th layer.
By definition, $D_{i}\supseteq C_{j}$ for every type $j$ outside the top $m$
layers of the partition. As to types in the top $m$ layers, by the inductive
step\ these types play a constant action $a=0$ for all $x$ in any
equilibrium --- i.e., there is no variation in their action. It follows that
if $p$ is consistent with equilibrium, then $y\perp a\mid x_{D_{i}}$.
Formula (\ref{delta i}) then implies $\Delta _{i}(x)=0$. It follows that in
equilibrium, type $i$ plays $a=0$ for all $x$. This completes the inductive
proof.\bigskip
\end{proof}

Thus, when $\gamma =0$ and the binary relation $P$ is complete and
quasitransitive --- i.e., the data types are ordered in a certain sense ---
the equilibrium requirement fully \textquotedblleft
protects\textquotedblright\ the DM from choice errors due to flawed causal
inference. It does so by shutting down the channels through which the choice
behavior of some types could confound the relation between other types'
actions and $y$. Types in the top layer of the $P$-based partition
effectively control for all sources of correlation between $a$ and $y$. Even
when a top-layer type does not control for some exogenous variable, this
does not matter because no other type conditions on this variable, hence it
generates no confounding effect. As a result, top-layer types' subjective
best-replying implies that they do not generate any variation in choice
behavior. This means that types in the next layer effectively control for
all potential confounders --- which would not be the case if we did not
impose the equilibrium condition on the behavior of top-layer types. This
equilibrium effect spreads through all layers of the partition.\bigskip

\begin{proposition}
\label{prop incomplete}Let $\gamma =0$. Suppose that $P$ violates
completeness or quasitransitivity. Then, for any $c,\beta \in (0,1)$, there
exist $\lambda $ and $(p(x,y))$ such that $\Pr (a=1)>\beta $ in some
equilibrium.
\end{proposition}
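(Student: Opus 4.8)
I would split into two cases according to how $P$ fails to be complete-and-quasitransitive. In case (a), $P$ is not complete, so there are distinct $i,j\in N$ with $i\NEG{P}j$ and $j\NEG{P}i$. In case (b), $P^{*}$ contains a cycle $i_{1}P^{*}i_{2}P^{*}\cdots P^{*}i_{r}P^{*}i_{1}$ (necessarily $r\ge 3$, since $P^{*}$ is asymmetric); this is what a quasitransitivity failure delivers once completeness is invoked --- a violation $iP^{*}j$, $jP^{*}k$, $\neg(iP^{*}k)$ forces $kPi$, and iterating closes a loop --- and I would take a shortest (hence chordless) such cycle. In both cases the mechanism is the one from Example 3.1: build a pattern of data types each of whom, by conditioning on a variable the next one cannot control for, manufactures a confounding effect that the next one misreads as causation, self-sustaining in equilibrium.

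\emph{Case (a).} From $i\NEG{P}j$, $j\NEG{P}i$ pick $k\in C_{j}\setminus D_{i}$ and $\ell\in C_{i}\setminus D_{j}$; since $\ell\in C_{i}\subseteq D_{i}$ and $k\notin D_{i}$ we get $k\ne\ell$. Put $\lambda_{i}=\lambda_{j}=\tfrac12$ and zero on all other types, make every exogenous variable other than $x_{k},x_{\ell}$ degenerate, let $(x_{k},x_{\ell})$ be symmetric with $\Pr(x_{k}=1)=\Pr(x_{\ell}=1)=\beta'$ and $\Pr(x_{k}=1\mid x_{\ell}=1)=\Pr(x_{\ell}=1\mid x_{k}=1)=q$, and set $y=x_{k}x_{\ell}$; choose $\beta'\in(\beta,1)$ and then $q\in[\max\{c/(2-c),\,2-1/\beta'\},1)$, a nonempty interval since both bounds lie below $1$. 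The candidate profile has $i$ play $a=x_{\ell}$ (feasible as $\ell\in C_{i}$), $j$ play $a=x_{k}$ (feasible as $k\in C_{j}$), and all other types play $a=0$. Because $k\notin D_{i}$ and the variables in $D_{i}\setminus C_{i}$ are degenerate, formula~(\ref{delta i}) collapses verbatim to the Example 3.1 computation, giving $\Delta_{i}(x_{\ell}=1)=2q/(1+q)\ge c$ and $\Delta_{i}(x_{\ell}=0)=0$, and symmetrically for $j$, so both strictly best-reply (the events conditioned on inside~(\ref{delta i}) have positive probability because $q<1$ and $\beta'\in(0,1)$). Zero-mass types do not enter $p$, hence perturb no $\Delta$, so each can freely best-reply to its own fixed perceived effect; the profile is an equilibrium with $\Pr(a=1)=\beta'>\beta$.

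\emph{Case (b).} For each $s$ (indices mod $r$), $i_{s}P^{*}i_{s+1}$ furnishes $\ell_{s}\in C_{i_{s}}\setminus D_{i_{s+1}}$: type $i_{s}$ conditions on $x_{\ell_{s}}$ but $i_{s+1}$ cannot control for it. I would put $\lambda$ uniform on $\{i_{1},\dots,i_{r}\}$, make all variables outside $\{x_{\ell_{1}},\dots,x_{\ell_{r}}\}$ degenerate, and choose a symmetric, positively-but-imperfectly-correlated law for $(x_{\ell_{1}},\dots,x_{\ell_{r}})$ with a monotone outcome function (e.g.\ $y=\prod_{s}x_{\ell_{s}}$) so that in the profile where each $i_{s}$ plays $a=x_{\ell_{s}}$ (and other types play $a=0$), the residual correlation between $a$ and $y$ that remains after $i_{s+1}$ conditions on $x_{\ell_{s+1}}$ --- carried by $i_{s}$'s action variation with $x_{\ell_{s}}$ and reinforced by the conditioning on the common descendant $a$ --- is read by $i_{s+1}$ as a causal effect $\ge c$ on $\{x_{\ell_{s+1}}=1\}$ and $0$ on $\{x_{\ell_{s+1}}=0\}$, exactly as in Example 3.1; tuning the correlation close to $1$ forces $\Pr(a=1)$ (the probability that some $x_{\ell_{s}}=1$) above $\beta$. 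The cycle is what makes this self-consistent: $i_{s+1}$'s error is justified by $i_{s}$'s action variation, justified by $i_{s-1}$'s, around the loop --- breaking it anywhere would unravel it, which is precisely the content of Proposition~\ref{prop complete}.

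\emph{Main obstacle.} The structural steps are routine: distinctness and existence of the $\ell_{s}$, the collapse of adjustment sums over degenerate variables, the irrelevance of zero-mass types, and positivity of the conditioning events so that~(\ref{delta i}) is well defined; and case (a) is literally Example 3.1 with relabeled variables. The genuine work is the incentive verification in case (b): choosing one joint law for $(x_{\ell_{1}},\dots,x_{\ell_{r}})$ and one monotone outcome function making, \emph{simultaneously} for every $s$, (i) $\Delta_{i_{s}}\ge c$ exactly where the candidate prescribes $a=1$, (ii) $\Delta_{i_{s}}=0$ elsewhere, and (iii) $\Pr(a=1)>\beta$. Reducing to a shortest, chordless cycle is what keeps this tractable (a chord would produce a shorter cycle), and for $r=3$ the per-coordinate algebra is exactly the $2q/(1+q)$ computation of Example 3.1; the one new point to check is that monotonicity of $y$ and of the candidate actions in the $x_{\ell_{s}}$ prevents the conditioning on the common descendant $a$ from flipping the sign of the residual correlation.
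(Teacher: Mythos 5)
Your two-case decomposition is exactly the paper's, and your case (a) is essentially the paper's incompleteness argument: pick variables in $C_{1}\setminus D_{2}$ and $C_{2}\setminus D_{1}$, let each of the two types play $a=x_{i}$, and choose the joint law of $(x_{1},x_{2})$ with $y=x_{1}x_{2}$ so that $\Delta_{i}(x_{i}=1)\geq c$ and $\Delta_{i}(x_{i}=0)=0$; the paper puts mass $1-\varepsilon$ on $(1,1)$ and $\varepsilon/2$ on each single-zero profile, which drives $\Delta_{i}(x_{i}=1)\to 1$ and so works for every $c$, while your Example-3.1 parametrization with the feasibility constraint $q\geq 2-1/\beta'$ is a fine variant. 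The trouble is case (b), which is where the proposition goes beyond Example 3.1 and where your write-up is a plan rather than a proof.

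Two concrete gaps in case (b). First, the step from "complete but not quasitransitive" to a $P^{\ast}$-cycle is not justified by "iterating closes a loop": from $iP^{\ast}j$, $jP^{\ast}k$, $\neg(iP^{\ast}k)$, completeness only gives $kPi$, and if in addition $iPk$ (a tie) there is no strict edge from $k$ back to $i$ and no cycle need exist at all. Concretely, take $N=\{1,2,3\}$ with $C_{1}=\{1\},D_{1}=\{1,2\}$, $C_{2}=D_{2}=\{2\}$, $C_{3}=\emptyset,D_{3}=\{1\}$: $P$ is complete, $1P^{\ast}2P^{\ast}3$ while $1$ and $3$ are mutually $P$-related, so quasitransitivity fails, yet $P^{\ast}$ is acyclic and the induction in the proof of Proposition \ref{prop complete} (which uses only completeness and acyclicity of $P^{\ast}$) applies verbatim, forcing $a\equiv 0$ in every equilibrium for every $\lambda$ and $(p(x,y))$ — so in the tie subcase no construction of the required kind can exist, and producing a strict cycle is the real content of this step (the paper asserts a $3$-cycle here without your "iteration" argument; your sketch does not supply it either). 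Second, even granting a strict $3$-cycle $1P^{\ast}2P^{\ast}3P^{\ast}1$, which yields distinct variables $1\in C_{1}\setminus D_{2}$, $2\in C_{2}\setminus D_{3}$, $3\in C_{3}\setminus D_{1}$, you explicitly defer the simultaneous incentive verification as your "main obstacle," whereas the paper disposes of it at once with a near-degenerate specification: mass $1-\varepsilon$ on $x_{1}=x_{2}=x_{3}=1$, mass $\varepsilon/3$ on each profile with a single zero, $y=x_{1}x_{2}x_{3}$, and each type $i$ playing $a=x_{i}$; then, by the same computation as in case (a), $\Delta_{i}(x_{i}=0)=0$ and $\Delta_{i}(x_{i}=1)\to 1$ as $\varepsilon\to 0$ for every $i$, so the profile is an equilibrium for any $c$ and $\Pr(a=1)\geq 1-\varepsilon>\beta$. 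No general-$r$ chordless-cycle machinery or delicate correlation tuning is needed; as written, your case (b) leaves precisely the verification the proposition requires unproved.
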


\begin{proof}
Suppose first that $P$ is incomplete. Then, there exist two types, denoted
conveniently $1$ and $2$, such that $C_{1}\setminus D_{2}$ and $%
C_{2}\setminus D_{1}$ are non-empty. Select two variables in $C_{1}\setminus
D_{2}$ and $C_{2}\setminus D_{1}$, and denote them $1$ and $2$ as well,
respectively. Suppose that $\lambda _{1},\lambda _{2}>0$ and\ $\lambda
_{1}+\lambda _{2}=1$. Construct $p$ as follows. First, let $x_{1},x_{2}\in
\{0,1\}$, and%
\begin{eqnarray*}
p(x_{1} &=&1,x_{2}=1)=1-\varepsilon \\
p(x_{1} &=&0,x_{2}=1)=p(x_{1}=1,x_{2}=0)=\frac{\varepsilon }{2}
\end{eqnarray*}%
where $\varepsilon >0$ is arbitrarily small. Second, let $p(y=1\mid
x_{1},x_{2})=x_{1}x_{2}$. Thus, $x_{1}$ and $x_{2}$ are the only $x$
variables that determine $y$, and so we can afford to ignore all other $x$
variables.

Given this specification of $\lambda $ and $p(x,y)$, let us now construct an
equilibrium in which for each type $i=1,2$, $a_{i}=x_{i}$ with probability
one. Without loss of generality, consider type $1$'s reasoning. This type's
perceived causal effect of $a$ on $y$ given $x_{1}$ is%
\[
\Delta _{1}(x_{1})=p(y=1\mid a=1,x_{1})-p(y=1\mid a=0,x_{1}) 
\]%
because all other variables are either not in $D_{1}$ or irrelevant for the
determination of $y$ and therefore can be ignored. Note that since $%
y=x_{1}x_{2}$ with probability one,%
\begin{eqnarray*}
p(y &=&1\mid a,x_{1}=1)=p(x_{2}=1\mid a,x_{1}=1) \\
p(y &=&1\mid a,x_{1}=0)=0
\end{eqnarray*}%
for every $a$. Neither of these two formulas ever conditions on null events,
because all four combinations of $(a,x_{1})$ occur with positive
probability. (Note that the combination $a=1$ and $x_{1}=0$ arises when $%
x_{2}=1$, because of the DM's strategy.) By our construction of $%
p(x_{1},x_{2})$ and the DM's strategy,%
\begin{eqnarray*}
p(x_{2} &=&1\mid a=1,x_{1}=1)=\frac{1-\varepsilon }{1-\varepsilon +\frac{%
\varepsilon }{2}\cdot \lambda _{1}} \\
p(x_{2} &=&1\mid a=0,x_{1}=1)=0
\end{eqnarray*}%
It follows that $\Delta _{1}(x_{1}=0)=0$, while $\Delta
_{1}(x_{1}=1)\rightarrow 1$ as $\varepsilon \rightarrow 0$. Therefore, type $%
1$'s postulated strategy can be consistent with equilibrium for values of $c$
that are arbitrarily close to one.

Now suppose that $P$ is complete but not quasitransitive. This means that $%
P^{\ast }$ must have a cycle of length $3$ --- that is, we can find three
types, denoted $1,2,3$, such that $1P^{\ast }2$, $2P^{\ast }3$ and $3P^{\ast
}1$. By the definition of $P^{\ast }$, this means that for each of the three
types $i=1,2,3$, there is a distinct variable in $\{1,...,K\}$, conveniently
denoted $i$ as well, such that $1\in C_{1}\setminus D_{2}$, $2\in
C_{2}\setminus D_{3}$ and $3\in C_{3}\setminus D_{1}$. Suppose $\lambda
_{1},\lambda _{2},\lambda _{3}>0$ and $\lambda _{1}+\lambda _{2}+\lambda
_{3}=1$. Let $x_{1},x_{2},x_{3}\in \{0,1\}$. Construct $p$ as follows: First,%
\[
p(x_{1}=1,x_{2}=1,x_{3}=1)=1-\varepsilon 
\]%
and%
\[
p(x_{i}=0,x_{j}=x_{k}=1)=\frac{\varepsilon }{3} 
\]%
for every $i=1,2,3$ and $j,k\neq i$, where $\varepsilon >0$ is arbitrarily
small. Second, let $p(y=1\mid x_{1},x_{2},x_{3})=x_{1}x_{2}x_{3}$. Thus, $%
x_{1},x_{2},x_{3}$ are the only $x$ variables that determine $y$, and so we
can afford to ignore all other $x$ variables. Suppose each type $i=1,2,3$
plays $a=x_{i}$ with probability one. Using essentially the same calculation
as in the case of incomplete $P$, we can see that for every $i=1,2,3$, $%
\Delta _{i}(x_{i}=0)=0$, whereas $\Delta _{i}(x_{i}=1)\rightarrow 1$ as $%
\varepsilon \rightarrow 0$. Therefore, the postulated strategy profile is an
equilibrium.\bigskip
\end{proof}

Thus, the upper bound on the DM's equilibrium welfare loss due to wrong
causal inferences critically depends on whether the binary relation $P$ is
complete and quasitransitive. When it is, the equilibrium behavior of some
data types cannot generate a variation that produces confounding patterns
that other data types misinterpret as causal. When it is not, the
equilibrium behavior of different types can create such confounding patterns
that mutually sustain their causal-inference errors. In that case, the
equilibrium assumption does not constrain the maximal possible welfare loss
due to these errors.

The distinction between the two cases can be described as a distinction
between \textquotedblleft vertical\textquotedblright\ and \textquotedblleft
horizontal\textquotedblright\ differentiation among data types. This is
especially palpable in the case of simple types, where the results hinge on
whether $P^{\ast }$ is a linear ordering. When it is, the types' sets of
control variables are ordered by set inclusion, and in this case the
equilibrium welfare loss is zero. When it is not, the difference between
types is that they control for different variables, and this
\textquotedblleft horizontal\textquotedblright\ differentiation enables them
to create mutually reinforcing confounding patterns.

\section{Analysis: Heterogeneous Preferences}

In this section I reintroduce preference heterogeneity, by assuming $\gamma
\in (0,1)$. Unlike the homoegenous-preference case, here I lack a complete
characterization of the maximal equilibrium welfare loss, and present a
number of partial results. In particular, I restrict attention to \textit{%
simple data types}, as defined in Section 2 --- that is, $C_{i}=D_{i}$ for
every data type $i$. Recall that in this case, $P^{\ast }$ is complete if
and only if it is a linear ordering. Denote $\delta _{t}=p(y=1\mid t)$.
Without loss of generality, assume $\delta _{1}\geq \delta _{0}$.\bigskip

\noindent \textit{Example 4.1}

\noindent Suppose $\delta _{t}\equiv t$. Let $K=0$ and $n=1$. This means
that there is a unique data type, $C=\emptyset $. One interpretation for
this setting is that $a$ represents a student's decision whether to select a
math-intensive major in college; $t$ indicates whether he likes math; and $y$
represents his subsequent earnings. The student learns the correlation
between $a$ and $y$. He has no access to control variables, and therefore
ends up treating the correlation as causal. The assumption that $\delta
_{t}\equiv t$ means that fondness for math is perfectly correlated with math
skills that determine earnings, independently of the student's decision.

I will now show that this setting gives rise to a unique equilibrium, and I
will characterize the DM's expected welfare loss in this equilibrium.

First, let us obtain an expression for the DM's estimated causal effect of $a
$ on $y$. By definition, it is%
\[
\Delta =p(y=1\mid a=1)-p(y=1\mid a=0)
\]%
Denote $\alpha _{t}=\sigma (a=1\mid t)$. When the DM's strategy is fully
mixed, $\alpha _{t}\in (0,1)$ for every $t$. By the DM's preferences, $%
\alpha _{1}\geq \alpha _{0}$. Now obtain explicit expressions for the terms
that define $\Delta $:%
\begin{eqnarray*}
p(y &=&1\mid a=1)=\frac{\gamma \cdot \alpha _{1}\cdot \delta _{1}+(1-\gamma
)\cdot \alpha _{0}\cdot \delta _{0}}{\gamma \cdot \alpha _{1}+(1-\gamma
)\cdot \alpha _{0}}\medskip  \\
p(y &=&1\mid a=0)=\frac{\gamma \cdot (1-\alpha _{1})\cdot \delta
_{1}+(1-\gamma )\cdot (1-\alpha _{0})\cdot \delta _{0}}{\gamma \cdot
(1-\alpha _{1})+(1-\gamma )\cdot (1-\alpha _{0})}
\end{eqnarray*}%
A simple calculation establishes that since $\delta _{1}=1>0=\delta _{0}$
and $\alpha _{1}\geq \alpha _{0}$, we must have $\Delta \geq 0$. This in
turn implies that $\alpha _{1}\geq 1-\varepsilon $ in $\varepsilon $%
-equilibrium, because when $t=1$, the DM perceives no conflict between his
intrinsic taste for $t=1$ and the effect of his choice on $y$. Plugging the
known expressions for $\alpha _{1}$ and $\delta _{t}$ and taking the $%
\varepsilon \rightarrow 0$ limit, we obtain%
\[
\Delta =\frac{\gamma }{\gamma +(1-\gamma )\cdot \alpha _{0}}
\]

If $\alpha _{0}\leq \varepsilon $ in $\varepsilon $-equilibrium, then $%
\Delta \rightarrow 1$ in the $\varepsilon \rightarrow 0$ limit. But then, $%
\Delta >c$, hence playing $a=1$ at $t=0$ is subjectively optimal, in
contradiction with $\alpha _{0}\leq \varepsilon $. It follows that $\alpha
_{0}>0$ in equilibrium. There are two cases to consider. First, suppose $%
\alpha _{0}\in (0,1)$. This requires $\Delta =c$ (and therefore $\gamma <c$%
), such that%
\[
\alpha _{0}=\frac{\gamma (1-c)}{(1-\gamma )c} 
\]%
Since the DM only commits an error in equilibrium when $t=0$, his expected
equilibrium welfare loss is%
\[
c\cdot (1-\gamma )\cdot \alpha _{0}=\gamma (1-c)<\gamma (1-\gamma ) 
\]%
By setting $\gamma \approx c$, we can get arbitrarily close to the upper
bound of $\gamma (1-\gamma )$.

Second, suppose $\alpha _{0}=1$. This requires us to sustain this
equilibrium with suitable trembles. Specifically, suppose that such that $%
\alpha _{1}=1-\varepsilon ^{2}$ whereas $\alpha _{0}=1-\varepsilon $. Then,
as $\varepsilon \rightarrow 0$, we obtain $p(y=1\mid a=1)\approx \gamma $
and $p(y=1\mid a=1)\approx 0$. If $\gamma >c$, this is consistent with
equilibrium. The DM's welfare loss in this equilibrium is%
\[
c\cdot (1-\gamma )\cdot 1<\gamma (1-\gamma ) 
\]%
Again, by setting $\gamma \approx c$, we can get arbitrarily close to this
upper bound. $\square $\bigskip

Thus, for any configuration of $c$ and $\gamma $, there is a unique
equilibrium in this setting. The DM's equilibrium welfare loss in this
equilibrium is always below $\gamma (1-\gamma )$. This bound can be
approximated arbitrarily well by setting $\gamma \approx c$. Furthermore,
the trembling-hand aspect of our equilibrium concept is not necessary for
the upper bound.

As in earlier examples, equilibrium forces in Example 3.1 \textquotedblleft
protect\textquotedblright\ the DM against causal errors, by limiting his
welfare loss below $\gamma (1-\gamma )$ --- compared with the
non-equilibrium benchmark of $1$. The intuition is as follows. The DM
mistakes the correlation between $a$ and $y$ for a causal effect. This
correlation is large when $a$ varies strongly with $t$; it hits the maximal
level when $a$ always coincides with $t$. However, that extreme case is
precisely when the DM commits no error. At the other extreme, if the DM
almost always plays $a=1$ because his estimated causal effect of $a$ on $y$
is above $c$, the frequency of the DM's error is maximal. However, since in
this case $a$ varies little with $y$, the estimated causal effect is
smaller. In general, a larger estimated causal effect implies a lower
equilibrium frequency of making an error. This is why equilibrium behavior
limits the DM's expected welfare loss due to wrong causal inference. $%
\square $\bigskip

Let us now turn to characterizations of the upper bound on the DM's
equilibrium welfare loss, under certain restrictions on the data-generating
process. I begin by imposing the domain restriction that $p(y\mid t,x)\equiv
p(y\mid t)$ --- i.e., $y\perp x\mid t$. This fits situations in which the
DM's preference type is a sufficient statistic for determining the outcome,
and the $x$ variables are only potential correlates of this statistic. For
instance, whether a student regards studying as a costly or pleasurable
activity is the cause of her school performance. This attitude (which is not
observable to others) may be correlated with observable socioeconomic
indicators, but these are merely proxies for the true cause.\bigskip 

\begin{proposition}
\label{prop bound linear ordering}Suppose that all data types are simple and
that $P$ is complete. If $y\perp x\mid t$, then the DM's expected welfare
loss in equilibrium is at most $\gamma (1-\gamma )$.
\end{proposition}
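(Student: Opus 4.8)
The plan is to leverage the linear order on conditioning sets (which holds because the types are simple and $P$ is complete, so $P^{\ast}$ is a linear ordering; relabel so that $C_{1}\supseteq C_{2}\supseteq\cdots\supseteq C_{n}$) to establish two things: that equilibrium forces every DM with $t=1$ to play $a=1$, and that the remaining errors---DMs with $t=0$ who play $a=1$---can be covered by a family of pairwise disjoint events whose probabilities sum to at most $\gamma$. Using $y\perp x\mid t$, formula (\ref{delta i}) collapses to $\Delta_{i}(x)=(\delta_{1}-\delta_{0})\bigl[p(t=1\mid a=1,x_{C_{i}})-p(t=1\mid a=0,x_{C_{i}})\bigr]$, so---since $\delta_{1}\ge\delta_{0}$---$\Delta_{i}(x)$ has the sign of $p(a=1\mid t=1,x_{C_{i}})-p(a=1\mid t=0,x_{C_{i}})$. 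A preliminary observation, used repeatedly: because $\Delta_{i}$ does not depend on $t$ while the payoff term pulls $a$ toward $t$, in any equilibrium $\sigma_{i}(a=1\mid t=1,x_{C_{i}})\ge\sigma_{i}(a=1\mid t=0,x_{C_{i}})$ pointwise (if a type-$i$ DM with $t=0$ plays $a=1$ at $x$ then $\Delta_{i}(x)\ge c$, so a type-$i$ DM with $t=1$ strictly prefers $a=1$ there).

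First I would show $\sigma_{i}(a=1\mid t=1,\cdot)\equiv1$ for every type $i$, by induction along $C_{1}\supseteq\cdots\supseteq C_{n}$. For the finest type, $p(a\mid t,x)$ is measurable in $x_{C_{1}}$, hence $p(a=1\mid t=1,x_{C_{1}})-p(a=1\mid t=0,x_{C_{1}})=\sum_{j}\lambda_{j}\bigl(\sigma_{j}(a=1\mid t=1,x_{C_{j}})-\sigma_{j}(a=1\mid t=0,x_{C_{j}})\bigr)\ge0$, so $\Delta_{1}\ge0$ and the best-reply condition gives $\sigma_{1}(a=1\mid t=1,\cdot)\equiv1$. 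For the inductive step, condition on $x_{C_{i}}$ and split $p(a=1\mid t,x_{C_{i}})$ by type: finer types $j<i$ contribute exactly $\lambda_{j}$ when $t=1$ (inductive hypothesis) and at most $\lambda_{j}$ when $t=0$; type $i$ and coarser types $j>i$ (whose $x_{C_{j}}$ is a deterministic function of $x_{C_{i}}$) contribute monotonically in $t$ by the preliminary observation; so $\Delta_{i}\ge0$ and $\sigma_{i}(a=1\mid t=1,\cdot)\equiv1$ again. Consequently $p(a=0,t=1)=0$, the welfare loss equals $c\cdot E$ with $E:=\Pr(a=1,t=0)$, and the formulas simplify: $p(a=1\mid t=1,x_{C_{j}})=1$ for every $j$ and every realization, so $p(t=1\mid a=0,x_{C_{j}})=0$ and $\Delta_{j}(x_{C_{j}})=(\delta_{1}-\delta_{0})\,\frac{p(t=1,x_{C_{j}})}{p(t=1,x_{C_{j}})+p(a=1,t=0,x_{C_{j}})}$.

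The crux is bounding $E$. A type-$i$ DM with $t=0$ plays $a=1$ only where $\Delta_{i}(x_{C_{i}})\ge c$, which---using $\delta_{1}-\delta_{0}\le1$ and the last display---rearranges to $p(a=1,t=0,x_{C_{i}})\le\frac{1-c}{c}\,p(t=1,x_{C_{i}})$: whenever a type-$i$ DM errs at a point with $x_{C_{i}}=w$, the event $\{x_{C_{i}}=w\}$ has total $\{a=1,t=0\}$-mass at most $\frac{1-c}{c}$ times its $\{t=1\}$-mass. Summing this naively over types overcounts and only yields $E\le n\cdot\frac{1-c}{c}\gamma$; the fix is to cover the error by a disjoint subfamily of such events. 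For every realization $\nu$ of $x_{C_{1}}$ at which some $t=0$ DM errs, let $j^{\ast}(\nu)$ be the coarsest (largest-index) type erring at $\nu$ and set $B(\nu)=\{x_{C_{j^{\ast}(\nu)}}=\nu_{C_{j^{\ast}(\nu)}}\}$. The sets $\{x_{C_{j}}=w\}$ form a laminar family because the $C_{j}$ are nested, and if $B(\nu)\subsetneq B(\nu')$ then $j^{\ast}(\nu')$ would already err at $\nu$, forcing $j^{\ast}(\nu)$ to be at least as coarse as $j^{\ast}(\nu')$---a contradiction; so the distinct $B(\nu)$'s are pairwise disjoint, and they cover every realization with $a=1,t=0$. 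Since each $B(\nu)$ satisfies the displayed inequality, $E\le\frac{1-c}{c}\sum p(t=1,B(\nu))\le\frac{1-c}{c}\,\gamma$. Combining with the trivial bound $E\le\Pr(t=0)=1-\gamma$ gives $c\,E\le\min\{(1-c)\gamma,\;c(1-\gamma)\}$, and this is $\le\gamma(1-\gamma)$, since $(1-c)\gamma\le\gamma(1-\gamma)$ whenever $\gamma\le c$ and $c(1-\gamma)\le\gamma(1-\gamma)$ whenever $\gamma\ge c$.

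I expect the covering argument of the last step to be the real obstacle: the induction and the rearrangement of the equilibrium inequality are essentially mechanical, but obtaining a bound on $E$ that does not deteriorate with the number of types requires recognizing that each unit of error can be attributed to a single ``coarsest responsible type'' and that the linear order on conditioning sets renders the corresponding events disjoint. As a sanity check, with a single type and $\delta_{t}\equiv t$ this reproduces Example 4.1: the bound $E\le\frac{1-c}{c}\gamma$ binds with equality when $\alpha_{0}\in(0,1)$, and $E\le1-\gamma$ binds when $\alpha_{0}=1$.
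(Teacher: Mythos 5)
Your proof is correct, and its crux is genuinely different from the paper's. The first half --- the collapse of (\ref{delta i}) to $\Delta_{i}(x)=(\delta_{1}-\delta_{0})\left[ p(t=1\mid a=1,x_{C_{i}})-p(t=1\mid a=0,x_{C_{i}})\right] $ under $y\perp x\mid t$, the pointwise monotonicity $\sigma_{i}(a=1\mid t=1,\cdot )\geq \sigma_{i}(a=1\mid t=0,\cdot )$, and the induction along the nested conditioning sets showing that every type plays $a=1$ at $t=1$ --- coincides with Steps 1--3 of the paper's proof. You diverge in how you bound the error mass $E=\Pr (a=1,t=0)$. The paper conditions on each $x$, takes the \emph{finest} erring type $i(x)$, bounds the conditional loss by $\Delta_{i(x)}(x)\,(1-\gamma (x))\,\alpha $ with $\alpha =p(a=1\mid t=0,x_{C_{i(x)}})$, maximizes over $\alpha $ (worst case $\alpha =1$) to get $(1-\gamma (x))\,\gamma (x_{C_{i(x)}})$, and then averages over $x$ and invokes a concavity/Jensen argument to show the worst case is $\gamma (x)\equiv \gamma $, yielding $\gamma (1-\gamma )$. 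You instead keep the threshold condition $c\leq \Delta_{i}$ in the linear form $p(a=1,t=0,x_{C_{i}}=w)\leq \frac{1-c}{c}\,p(t=1,x_{C_{i}}=w)$, attribute each erring realization to its \emph{coarsest} responsible type, use the laminar structure of the nested cells to extract a pairwise disjoint cover of the error event, and sum, obtaining $cE\leq \min \{(1-c)\gamma ,\,c(1-\gamma )\}\leq \gamma (1-\gamma )$. Your route avoids both the per-$x$ optimization over $\alpha $ and the Jensen step, and it yields a weakly sharper, $c$-dependent bound (consistent with Example 4.1, where the loss equals $\gamma (1-c)$ or $c(1-\gamma )$ and approaches $\gamma (1-\gamma )$ only when $\gamma \approx c$); the paper's route makes more transparent how heterogeneity in $\gamma (x)$ is relaxed to the worst case of constant $\gamma $. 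Both arguments treat conditioning on events that become null in the limit (e.g., $a=0$ given $t=1$) with the same level of informality, which the trembling-hand construction is there to cover, so this is not a gap relative to the paper's standard.
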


\begin{proof}
The proof proceeds stepwise. Recall that since $P$ is complete, $P^{\ast }$
is a linear ordering. For convenience, enumerate the types according to $%
P^{\ast }$ --- i.e., $1P^{\ast }2P^{\ast }\cdots P^{\ast }n$. For every $x$
and every $C\subseteq \{1,...,K\}$, denote $\gamma (x)=p(t=1\mid x)$ and $%
\gamma (x_{C})=p(t=1\mid x_{C})$.\bigskip

\noindent \textbf{Step 1}: Deriving an expression for $\Delta _{i}(x)$

\noindent \textbf{Proof}: Since $y\perp (a,x)\mid t$, we can write%
\[
p(y\mid a,x_{C_{i}})=\sum_{t}p(t\mid a,x_{_{C_{i}}})p(y\mid
a,x_{C_{i}},t)=\sum_{t}p(t\mid a,x_{C_{i}})p(y\mid t) 
\]%
Plugging this in (\ref{delta i}), we obtain%
\begin{equation}
\Delta _{i}(x)=[p(t=1\mid a=1,x_{C_{i}})-p(t=1\mid a=0,x_{C_{i}})][\delta
_{1}-\delta _{0}]\bigskip  \label{delta y(t) case}
\end{equation}

\noindent \textbf{Step 2}: For every $x$, $\Delta _{1}(x)\geq 0$ and $\sigma
_{1}(a=1\mid t=1,x_{C_{1}})=1$.

\noindent \textbf{Proof}: For every $a$, the terms $p(t=1\mid a,x_{C_{i}})$
in (\ref{delta y(t) case}) can be written as%
\begin{equation}
\frac{\gamma (x_{C_{i}})p(a\mid t=1,x_{C_{i}})}{\gamma (x_{C_{i}})p(a\mid
t=1,x_{C_{i}})+(1-\gamma (x_{C_{i}}))p(a\mid t=0,x_{C_{i}})}
\label{p(t=1|a,x)}
\end{equation}%
We will now focus on the term $p(a\mid t=1,x_{C_{1}})$. Note that 
\begin{equation}
p(a\mid t,x_{C_{1}})=\sum_{x_{-C_{1}}}p(x_{-C_{1}}\mid t,x_{C_{1}})p(a\mid
t,x_{C_{1}},x_{_{-C_{1}}})  \label{p(a|t,xCi)}
\end{equation}%
By definition, $C_{1}\supset C_{j}$ for every $j>1$. This means that no data
type $j$ conditions his actions on $x_{-C_{1}}$. Therefore, (\ref{p(a|t,xCi)}%
) is equal to%
\[
\sum_{j=1}^{n}\lambda _{j}\sigma _{j}(a\mid t,x_{C_{j}}) 
\]%
By the DM's preferences, $\sigma _{i}(a=1\mid t=1,x_{C_{i}})\geq \sigma
_{i}(a=1\mid t=0,x_{C_{i}})$ in any equilibrium, for every $i$ and every $x.$
It follows that $p(a=1\mid t=1,x_{C_{1}})\geq p(a=1\mid t=0,x_{C_{1}})$ for
every $x_{C_{1}}$. A simple calculation then confirms that the expression (%
\ref{p(t=1|a,x)}) is weakly increasing in $a$ for $i=1$. Since $\delta
_{1}-\delta _{0}\geq 0$, it follows that $\Delta _{1}(x)\geq 0$.\bigskip

\noindent \textbf{Step 3}: Extending the property of Step 2 to all data types

\noindent \textbf{Proof}: The proof is by induction on $P^{\ast }$. Suppose
that for every type $j=1,...,m$, $\Delta _{j}(x)\geq 0$ and $\sigma
_{j}(a=1\mid t=1,x_{C_{j}})=1$. Now consider type $i=m+1$. We can write%
\[
p(a\mid t,x_{C_{i}})=\sum_{x_{-C_{i}}}p(x_{-C_{i}}\mid t,x_{C_{i}})\left[
\sum_{j\leq m}\lambda _{j}\sigma _{j}(a\mid t,x_{C_{j}})+\sum_{j>m}\lambda
_{j}\sigma _{j}(a\mid t,x_{C_{j}})\right] 
\]%
By the inductive step, $\sigma _{j}(a=1\mid t=1,x_{C_{j}})=1$ for every $%
j\leq m$. By definition, $C_{j}\subseteq C_{i}$ for every $j\geq m+1$, hence 
$\sigma _{j}(a\mid t,x_{C_{j}})$ is constant in $x_{-C_{i}}$. We already
observed that $\sigma _{j}(a=1\mid t=1,x_{C_{j}})\geq \sigma _{j}(a=1\mid
t=0,x_{C_{j}})$ for every $x_{C_{j}}$. It follows that $p(a=1\mid
t=1,x_{C_{i}})\geq p(a=1\mid t=0,x_{C_{i}})$. As in the proof of Step 2,
applying this inequality to (\ref{p(t=1|a,x)}) implies that $\Delta
_{i}(x)\geq 0$ and $\sigma _{i}(a=1\mid t=1,x_{C_{i}})=1$. This completes
the inductive proof.\bigskip

\noindent \textbf{Step 4}: An upper bound on the expected equilibrium
welfare loss given $x$

\noindent \textbf{Proof}: We have established that in any equilibrium, all
data types play $a=1$ with probability one when $t=1$. Therefore, they only
commit an error if they play $a=1$ with positive probability when $t=0$. Fix
the realization of $x$. Let $i(x)$ be the lowest-indexed type $j$ for which $%
\sigma _{j}(a=1\mid t=0,x_{C_{j}})>0$. Then, the DM's expected welfare loss
given $x$ is%
\[
c(1-\gamma (x))\sum_{j=i(x)}^{n}\lambda _{j}\sigma _{j}(a=1\mid
t=0,x_{C_{j}}) 
\]%
In order for type $i(x)$ to play $a=1$ given $x$ and $t=0$, it must be the
case that $c\leq \Delta _{i(x)}(x)$. By Step 3, $\sigma _{j}(a=1\mid
t=1,x_{C_{j}})=1$ for all $j$, hence $p(a=1\mid t=1,x_{C_{i(x)}})=1$.
Plugging this identity into (\ref{delta y(t) case})-(\ref{p(t=1|a,x)}) and
recalling that $0\leq \delta _{1}-\delta _{0}\leq 1$, we obtain%
\[
\Delta _{i(x)}(x)\leq \frac{\gamma (x_{C_{i(x)}})}{\gamma
(x_{C_{i(x)}})+(1-\gamma (x_{C_{i(x)}}))p(a=1\mid t=0,x_{C_{i(x)}})} 
\]

Since $C_{i}\supseteq C_{j}$ for every $j$ for which $\sigma _{j}(a=1\mid
t=0,x_{C_{j}})>0$, it follows that none of these types $j$ condition on $%
x_{-C_{i(x)}}$. Therefore,%
\[
p(a=1\mid t=0,x_{C_{i(x)}})=\sum_{j=i(x)}^{n}\lambda _{j}\sigma _{j}(a=1\mid
t=0,x_{C_{j}}) 
\]%
Denote this quantity by $\alpha $. This means that the DM's expected welfare
loss given $x$ is at most%
\[
\frac{\gamma (x_{C_{i(x)}})}{\gamma (x_{C_{i(x)}})+(1-\gamma
(x_{C_{i(x)}}))\alpha }\cdot (1-\gamma (x))\cdot \alpha 
\]%
This expression attains its maximal value when $\alpha =1$. Therefore, the
DM's expected welfare loss given $x$ is bounded from above by%
\[
(1-\gamma (x))\gamma (x_{C_{i(x)}})=(1-\gamma (x))\cdot \sum_{x^{\prime
}}p(x^{\prime }\mid x_{C_{i(x)}}^{\prime }=x_{C_{i(x)}})\gamma (x^{\prime
})\bigskip 
\]

\noindent \textbf{Step 5}: Deriving the upper bound on the DM's ex-ante
expected equilibrium welfare loss

\noindent \textbf{Proof}: By Step 4, the ex-ante welfare loss is at most%
\begin{equation}
\sum_{x}p(x)(1-\gamma (x))\cdot \sum_{x^{\prime }}\beta (x^{\prime
},x)\gamma (x^{\prime })  \label{convex}
\end{equation}%
where $\beta (\cdot )$ is a system of convex combinations, $\beta (x^{\prime
},x)=p(x^{\prime }\mid x_{C_{i(x)}}^{\prime }=x_{C_{i(x)}})$. Expression (%
\ref{convex}) is a concave function of $(\gamma (x))_{x}$. By Jensen's
inequality, it attains a maximum when $\gamma (x)=\gamma $ for all $x$, such
that the upper bound on the DM's expected equilibrium welfare loss is $%
\gamma (1-\gamma )$.\bigskip
\end{proof}

Thus, when the set of types is simple and $y$ is only determined by $t$, the
DM's expected equilibrium welfare loss is at most $\gamma (1-\gamma )$.
Example 3.1 established the tightness of this bound. This result also means
that across all distributions that satisfy $y\perp (x,a)\mid t$, the
expected welfare loss is at most $\frac{1}{4}$ --- compared with the
non-equilibrium upper bound of $1$. This is yet another demonstration of how
the equilibrium condition restricts the cost of faulty causal inferences. As 
$\gamma \rightarrow 0$, this loss converges to zero.

When completeness of $P$ is relaxed, finding a tight upper bound on the DM's
expected welfare loss when $y\perp x\mid t$ is an open problem. However, the
following result establishes that this bound cannot be lower than $\max
\{\gamma ,1-\gamma \}$. This carries the relevance of the distinction
between complete and incomplete $P$ to the setting with preference
heterogeneity.\bigskip 

\begin{proposition}
\label{prop sqrt bound}Suppose $P$ is incomplete. Then, for every $\gamma $,
there exist $c$, a distribution $\lambda $ over simple data types and a
distribution $(p(x,y\mid t))$ satisfying $y\perp x\mid t$, for which there
is an equilibrium in which the DM's expected welfare loss is arbitrarily
close to $\max \{\gamma ,1-\gamma \}$.
\end{proposition}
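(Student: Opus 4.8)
The plan is to adapt the mutual-confounding construction behind Proposition \ref{prop incomplete} (and Example 3.1) to the present setting, where the extra constraint $y\perp x\mid t$ forces every residual correlation between $a$ and $y$ to be channeled through the preference type $t$: since $\delta_{1}\geq\delta_{0}$, a spurious \emph{positive} correlation between $a$ and $y$ amounts to one between $a$ and $t$, and a spurious \emph{negative} one amounts to $a$ being anti-correlated with $t$. Because $P$ is incomplete, exactly as in the proof of Proposition \ref{prop incomplete} one selects two data types -- call them $1$ and $2$ -- and two exogenous variables -- also called $1$ and $2$ -- with $1\in C_{1}\setminus C_{2}$ and $2\in C_{2}\setminus C_{1}$ (recall $D_{i}=C_{i}$, the types being simple). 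One then makes all other exogenous variables degenerate and sets $\lambda_{i}=0$ for $i\neq 1,2$, so that only $x_{1},x_{2}\in\{0,1\}$ and the pair $(\lambda_{1},\lambda_{2})$ matter, and takes $y$ to be a near-deterministic function of $t$ (say $\delta_{1}=1$, $\delta_{0}=0$ after sending an auxiliary noise parameter to zero), so that $y\perp x\mid t$ holds automatically.

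The heart of the argument is to choose the joint law of $(t,x_{1},x_{2})$, the weights $(\lambda_{1},\lambda_{2})$, and a strategy profile with small trembles -- all depending on a vanishing parameter $\varepsilon$ -- so that in the induced equilibrium: (i) on the states where errors are prescribed, type $i$'s controls $x_{C_{i}}$ leave a residual correlation between $a$ and $y$, produced by the \emph{other} type's $x_{-C_{i}}$-contingent behavior (just as in Example 3.1, where each type's conditioning on a ``private'' variable creates confounding for the other), which makes the perceived effect $\Delta_{i}(x)$ tend to $+1$ or $-1$ as $\varepsilon\to0$; (ii) hence the incentive requirement $|\Delta_{i}(x)|\geq c$ is compatible with $c$ arbitrarily close to $1$; and (iii) the error-carrying states can be made to cover a fraction tending to the whole $t=0$ mass when $\gamma\leq\frac{1}{2}$, or the whole $t=1$ mass when $\gamma>\frac{1}{2}$, so that $\Pr(a\neq t)\to\max\{\gamma,1-\gamma\}$ and the welfare loss $c\cdot\Pr(a\neq t)\to\max\{\gamma,1-\gamma\}$.

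Concretely, for $\gamma\leq\frac{1}{2}$ I would take a distribution in which $t$ is nearly a function of $(x_{1},x_{2})$, with $x_{1}$ and $x_{2}$ each nearly uninformative about $t$ on their own but jointly nearly pinning it down, so that after type $i$ conditions on $x_{C_{i}}$ the action of the other type -- which tracks $x_{-C_{i}}$ -- becomes a near-perfect signal of $t$. The candidate profile has each type play $a=1$ when $t=1$ and has the $t=0$ agents play $a=1$ on the confounded states; verifying that this is a limit of $\varepsilon$-equilibria amounts to checking state by state that the prescribed action maximizes $\sum\tilde{p}_{i}(y\mid x_{C_{i}},do(a))u(t,a,y)$ given the computed $\Delta_{i}$'s and that the off-path action $a=0$ is supported by a consistent tremble. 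The case $\gamma>\frac{1}{2}$ is the mirror image: there the errors must come from the $t=1$ majority, i.e. from a spurious \emph{negative} $a$--$y$ correlation, which -- as the single-type bound $\gamma(1-\gamma)$ of Proposition \ref{prop bound linear ordering} shows -- no lone type can generate and which cannot be obtained by relabeling either (the convention $\delta_{1}\geq\delta_{0}$ would be broken); one instead has $t=1$ agents play $a=0$ on the confounded states and chooses the trembles so that, through the other type, $a=1$ there signals $t=0$, forcing $\Delta_{i}\to-1$. This is where the trembling-hand aspect of the equilibrium concept is used.

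The main obstacle is precisely points (i)--(iii): engineering the joint law of $(t,x_{1},x_{2})$, the weights, and the trembles so that the $\Delta_{i}$'s have the right sign at \emph{every} state, have magnitude tending to $1$ exactly on the prescribed-error states, and admit no profitable deviation elsewhere, all while the total error mass tends to $\max\{\gamma,1-\gamma\}$. I expect the $\gamma>\frac{1}{2}$ regime to be the more delicate, since there the equilibrium rests on the trembles and one must exhibit an explicit sequence of $\varepsilon$-equilibria converging to the claimed profile, rather than invoke a limit whose off-path beliefs are chosen after the fact.
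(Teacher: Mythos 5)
Your high-level plan (two $P$-incomparable simple types, $y$ a deterministic function of $t$, mutual confounding with $\Delta_i\to\pm 1$, $c$ near $1$, trembles) is indeed the paper's approach, but the proposal stops exactly where the proof begins: you yourself flag that ``engineering the joint law of $(t,x_1,x_2)$, the weights, and the trembles'' is the main obstacle, and you never exhibit such a law or verify a single incentive constraint. This engineering is the entire content of the paper's proof, and it is not routine: the paper's distribution gives $x_1,x_2$ a \emph{third} value $\#$ on which the bulk of the $t=0$ mass (probability $1-\gamma$) is parked, precisely so that this mass does not enter the conditioning cells $x_i\in\{0,1\}$ and dilute the signals that drive $\Delta_1(x_1=1)\approx 1$ and $\Delta_1(x_1=0)\approx -1$; the error mass $\approx\gamma$ then sits on the single atom $(t=1,x_1=0,x_2=0)$ where both types play $a=0$ against their taste. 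Your sketch restricts attention to binary $x_1,x_2$ (``only $x_1,x_2\in\{0,1\}$ \ldots matter''), and the paper explicitly notes that whether the ternary support can be dispensed with is an open question, so your plan would have to solve that problem and gives no indication how.

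There is also a substantive error in your case split. For $\gamma\le\tfrac12$ you propose to keep $\delta_1\ge\delta_0$ and generate errors by having (nearly all) $t=0$ agents play $a=1$ on the strength of a perceived effect $\Delta_i\approx +1$, with $c$ near $1$. This is impossible: with $\delta_1-\delta_0\le 1$, the incentive condition $\Delta_i(x)\ge c$ at a cell $x_{C_i}$ where a $t=0$ agent errs forces $p(t=1\mid a=1,x_{C_i})\ge c$, i.e.\ $p(t=0,a=1,x_{C_i})\le \frac{1-c}{c}\,p(t=1,a=1,x_{C_i})$. Summing over each type's error cells gives a total error probability at most $\frac{2\gamma(1-c)}{c}$, so the welfare loss along your route is at most $\min\{c(1-\gamma),\,2\gamma(1-c)\}\le \frac{2\gamma(1-\gamma)}{1+\gamma}<1-\gamma$, uniformly in $c$: the error mass you need is exactly the mass that dilutes the signal it must rely on. The resolution is the relabeling you dismiss: the convention $\delta_1\ge\delta_0$ is only a normalization and is not part of the proposition's hypotheses, so for $\gamma<\tfrac12$ one takes $\delta_0=1>\delta_1=0$ (equivalently, applies the paper's $\gamma\ge\tfrac12$ construction after swapping the labels of $t$ and of $a$), whence the erring $t=0$ majority is itself the high-$y$ population and its own behavior reinforces, rather than undermines, the perceived effect. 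As written, then, the proposal has a genuine gap in the only case it treats ``concretely'' and defers the construction in the other.
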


\begin{proof}
Since $P$ is incomplete, $K\geq 2$. Moreover, there exist two data types, $1$
and $2$, and two exogenous variables, conveniently denoted $x_{1}$ and $x_{2}
$, such that $1\in C_{1}\setminus C_{2}$ and $2\in C_{2}\setminus C_{1}$.
Suppose $\lambda _{1}+\lambda _{2}=1$. Without loss of generality, let $%
\gamma \geq \frac{1}{2}$, such that $\max \{\gamma ,1-\gamma \}=\gamma $.
Suppose that $x_{1},x_{2}\in \{0,1,\#\}$, and construct the following
distribution over triples $(t,x_{1},x_{2})$:%
\[
\begin{array}{cccc}
\Pr  & t & x_{1} & x_{2} \\ 
\beta  & 1 & 1 & 1 \\ 
\beta ^{2} & 0 & 1 & 0 \\ 
\beta ^{2} & 0 & 0 & 1 \\ 
1-\gamma  & 0 & \# & \# \\ 
\gamma -\beta -2\beta ^{2} & 1 & 0 & 0%
\end{array}%
\]%
Complete the exogenous components of $p$ by letting $\delta _{1}=1$ and $%
\delta _{0}=0$. Since there are no relevant $x$ variables other than $x_{1}$
and $x_{2}$, we can set without loss of generality $C_{1}=\{1\}$ and $%
C_{2}=\{2\}$.

Suppose each type $i$ plays $a_{i}=x_{i}$ with probability one whenever $%
x_{i}\in \{0,1\}$. In addition, suppose each type $i$ plays $a=0$ with
probability $1-\varepsilon $ when $x_{i}=2$, where $\varepsilon $ and $\beta 
$ are arbitrarily small. Let us calculate the terms in $\Delta _{1}(x_{1}=1)$%
:%
\begin{eqnarray*}
p(t &=&1\mid a=1,x_{1}=1)=\frac{\beta }{\beta +\lambda _{1}\beta ^{2}}%
\approx 1 \\
p(t &=&1\mid a=0,x_{1}=1)=0
\end{eqnarray*}%
such that $\Delta _{1}(x_{1}=1)\approx 1$. Let us now calculate the terms in 
$\Delta _{1}(x_{1}=0)$:%
\begin{eqnarray*}
p(t &=&1\mid a=1,x_{1}=0)=0 \\
p(t &=&1\mid a=0,x_{1}=0)=\frac{\gamma -\beta -2\beta ^{2}}{\gamma -\beta
-2\beta ^{2}+\lambda _{1}\beta ^{2}}\approx 1
\end{eqnarray*}%
such that $\Delta _{1}(x_{1}=0)\approx -1$. It follows that $\Delta
_{1}(x_{1}=1)>c$ and $\Delta _{1}(x_{1}=0)<-c$, such that type $1$ strictly
prefers to play $a_{i}=x_{i}$ for all $x_{i}\in \{0,1\}$. This is consistent
with the postulated strategy.

Finally, note that $p(t=1\mid a,x_{1}=\#)=0$ for both $a=0,1$, hence $\Delta
_{1}(x_{1}=\#)=0$. It is therefore optimal for type $1$ to play $a=0$ when $%
x_{1}=\#$. Since he follows this prescription with probability $%
1-\varepsilon $, this completes the confirmation that type $1$'s behavior is
consistent with $\varepsilon $-equilibrium. By symmetry, the same
calculation holds for type $2$. We have thus constructed an $\varepsilon $%
-equilibrium in which the DM commits an error with probability arbitrarily
close to $\gamma $. Since $c$ can be arbitrarily close to $1$, this
completes the proof.\bigskip 
\end{proof}

I conjecture that the upper bound obtained in this result is tight. Note
that in order to attain it, I used trembles and also required exogenous $x$
variables to take at least three values. Whether these elements in the
construction are indispensable is an open question.

The final result in this sub-section lifts all restrictions on $(p(x,y\mid
t))$ and $P$ and shows that in this case, the gap between equilibrium and
non-equilibrium upper bounds on the DM's welfare loss disappears.\bigskip 

\begin{proposition}
Suppose that all data types are simple and that $P$ is incomplete. Then, for
every $\gamma ,c\in (0,1)$, there exist $\lambda $ and $(p(x,y\mid t))$ for
which there is an equilibrium in which $\Pr (a\neq t)=1$.
\end{proposition}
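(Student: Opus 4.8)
The plan is to combine the two mechanisms already exhibited in the paper: the "horizontal differentiation" construction from Proposition \ref{prop incomplete} (and from Proposition \ref{prop sqrt bound}), which lets two types with incomparable conditioning sets mutually confound each other, together with the outcome being allowed to depend on $x$ in a way that lets a type reproduce the outcome perfectly. Since $P$ is incomplete there are two simple types, call them $1$ and $2$, and two exogenous variables $x_1 \in C_1 \setminus C_2$ and $x_2 \in C_2 \setminus C_1$; put all prior weight on these two types, $\lambda_1 + \lambda_2 = 1$, and ignore all other $x$ variables. The key degree of freedom here, relative to Section 3, is preference heterogeneity: I would choose $p(x,y\mid t)$ so that the realized value of $x_1$ (resp. $x_2$) is perfectly informative about $t$ and about $y$, and arrange the joint law of $(t,x_1,x_2)$ so that type $1$, conditioning only on $x_1$, sees $a=x_1$ as having an estimated causal effect on $y$ exceeding $c$ in exactly the direction that makes "play $a = x_1$" his unique best reply — even when $x_1 \ne t$.

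Concretely I would aim for a specification in which $y$ is a deterministic function that equals $x_1 x_2$ (or some similar product/agreement function), $t$ is pinned down by $(x_1,x_2)$ as well, and the support of $(t,x_1,x_2)$ is chosen so that along the postulated strategy profile $\sigma_i(a=x_i\mid\cdot)=1$ we get, for type $1$, $\Delta_1(x_1=1)$ close to $1$ and $\Delta_1(x_1=0)$ close to $-1$ (mirroring the computation in Proposition \ref{prop sqrt bound}), and symmetrically for type $2$. The point of letting $y$ depend on $x$ rather than only on $t$ is that it decouples the error frequency from $\gamma$: in Proposition \ref{prop sqrt bound} the "$\#$" state forced one type to play $a=0$ with high probability and capped the error at $\max\{\gamma,1-\gamma\}$, but here I want to remove any such escape state, so that in every state of the world at least one of the two types is prescribed an action $\ne t$ and — crucially — the realized value of $x_i$ never coincides with $t$ for the relevant type. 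I would therefore look for a distribution of $(t,x_1,x_2)$ supported on states where $x_1 \ne t$ or $x_2 \ne t$ (ideally where $a = x_i \ne t$ for the type who acts), so that $\Pr(a \ne t) = 1$ identically.

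The main obstacle — and the step I'd spend the most care on — is the joint feasibility of three things at once: (i) the postulated strategy is actually an $\varepsilon$-equilibrium, i.e. $|\Delta_i(x_i)| \ge c$ with the sign pointing toward $a = x_i$ for each realized $x_i$ and each type $i$; (ii) none of the conditioning events $\{a=a',x_i=v\}$ has zero probability, so the subjective beliefs in (\ref{delta i}) are well-defined (this is exactly the kind of check done in Example 3.1 and in the proof of Proposition \ref{prop incomplete}, and it typically forces one to put small but positive mass on "off-diagonal" states or to use trembles); and (iii) $\Pr(a\neq t)=1$, which requires that in every support point the acting type's prescribed action differs from $t$. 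Balancing (i) against (iii) is the delicate part: making $\Delta_i$ large in the direction of $x_i$ usually wants $x_i$ strongly correlated with both $t$ and $y$, whereas (iii) wants $x_i$ to disagree with $t$. I expect this can be reconciled by exploiting the same conditioning-on-$a$ logic used throughout — $a$ is a common effect of $x_1$ and $x_2$, so conditioning on $a$ re-correlates them — choosing the marginals of $(x_1,x_2)$ and the map $t = t(x_1,x_2)$ so that, say, $t = \mathbf{1}[x_1 = x_2]$ while $y = \mathbf{1}[x_1 = x_2 = 1]$, with the bulk of the mass on the state $(t,x_1,x_2)=(0,1,0)$ and a symmetric companion, plus $\varepsilon$-mass elsewhere to kill zero-probability conditioning. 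Once the support and the trembles are fixed, verifying (i)–(iii) is a finite computation of the same flavor as in the preceding proofs, and taking $\varepsilon \to 0$ (with $c$ close to $1$, as the statement permits us to choose $c$ freely) yields an equilibrium with $\Pr(a\ne t)=1$.
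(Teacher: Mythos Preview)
Your high-level plan matches the paper's proof exactly: pick two simple types $1,2$ with $1\in C_1\setminus C_2$ and $2\in C_2\setminus C_1$, put $\lambda_1+\lambda_2=1$, have each type play $a=x_i$, and engineer the joint law of $(t,x_1,x_2,y)$ so that $\Delta_i(x_i=1)\approx 1$ and $\Delta_i(x_i=0)\approx -1$. The verification is indeed a finite computation of the same flavor as Proposition~\ref{prop incomplete}.

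However, your concrete specification has a gap. You propose $t=\mathbf{1}[x_1=x_2]$ with ``the bulk of the mass on the state $(t,x_1,x_2)=(0,1,0)$ and a symmetric companion.'' But in state $(0,1,0)$ type~$2$ plays $a=x_2=0=t$, i.e.\ does \emph{not} err; symmetrically in $(0,0,1)$ type~$1$ does not err. With main mass on these off-diagonal states the error probability is about $\tfrac12$, not $1$. The requirement for $\Pr(a\neq t)\approx 1$ is that in the high-mass states \emph{both} $x_1\neq t$ \emph{and} $x_2\neq t$ simultaneously (not ``or,'' as you wrote), since both types act with positive probability. The paper achieves this by putting mass $1-\gamma-\varepsilon$ on $(t,x_1,x_2,y)=(0,1,1,1)$ and mass $\gamma-\varepsilon$ on $(1,0,0,1)$, i.e.\ $x_1=x_2=1-t$ on the bulk of the support; the $\varepsilon$-mass states $(0,1,0,0)$ and $(1,0,1,0)$ exist solely to make all $(a,x_i)$ combinations non-null. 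In those $\varepsilon$-states type~$2$ does not err, so the paper's construction actually yields $\Pr(a\neq t)$ \emph{arbitrarily close to} $1$ rather than exactly $1$ (the proof's final sentence says ``arbitrarily high probability''), which is worth noting.

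One further slip: you write ``with $c$ close to $1$, as the statement permits us to choose $c$ freely,'' but the proposition is a for-all statement in $c$; you must accept $c$ as given. This is harmless here because in both the paper's construction and any correct variant the estimated effects $\Delta_i(x_i)$ tend to $\pm 1$ as $\varepsilon\to 0$, so any fixed $c<1$ is accommodated by taking $\varepsilon$ small enough.
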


\begin{proof}
Since $P$ is incomplete, $K\geq 2$. Moreover, there exist two data types, $1$
and $2$, and two exogenous variables, conveniently denoted $x_{1}$ and $%
x_{2} $, such that $1\in C_{1}\setminus C_{2}$ and $2\in C_{2}\setminus
C_{1} $. Let $\lambda _{1}=\lambda _{2}=0.5$. Construct a distribution $p$
over $t,x_{1},x_{2},y$ given by the following table (suppose that $p$ is
constant over the other variables, such that they can be ignored), where $%
\varepsilon >0$ is arbitrarily small:%
\[
\begin{array}{ccccc}
p(t,x_{1},x_{2},y) & t & x_{1} & x_{2} & y \\ 
1-\gamma -\varepsilon & 0 & 1 & 1 & 1 \\ 
\gamma -\varepsilon & 1 & 0 & 0 & 1 \\ 
\varepsilon & 0 & 1 & 0 & 0 \\ 
\varepsilon & 1 & 0 & 1 & 0%
\end{array}%
\]

Suppose data type $i$ plays $a_{i}\equiv x_{i}$. Let us calculate $\Delta
_{1}(x_{1})$ for each $x_{1}$. First,%
\begin{eqnarray*}
p(y &=&1\mid a=1,x_{1}=1)=\frac{1-\gamma -\varepsilon }{1-\gamma
-\varepsilon +\varepsilon \cdot 0.5}\approx 1 \\
p(y &=&1\mid a=0,x_{1}=1)=0
\end{eqnarray*}%
where the second equation holds because the combination of $a=0$ and $%
x_{1}=1 $ occurs only when $x_{2}=0$, in which case $y=0$ with certainty.

Second,%
\begin{eqnarray*}
p(y &=&1\mid a=0,x_{1}=0)=\frac{\gamma -\varepsilon }{\gamma -\varepsilon
+\varepsilon \cdot 0.5} \\
p(y &=&1\mid a=1,x_{1}=0)=0
\end{eqnarray*}%
where the second equation holds because the combination of $a=1$ and $%
x_{1}=0 $ occurs only when $x_{2}=1$, in which case $y=0$ with certainty.

Plugging these terms into the definition of $\Delta _{1}(x_{1})$ yields $%
\Delta _{1}(x_{1}=1)\approx 1$ and $\Delta _{1}(x_{1}=0)\approx -1$. The
calculation for type $2$ is identical due to symmetry. Therefore, for every $%
c<1$, we can set $\varepsilon $ such that each data type $i$ will indeed
prefer to play $a\equiv x_{i}$. Furthermore, for both types $i$, $%
x_{i}=1-t_{i}$ with probability arbitrarily close to one. Therefore, the DM
plays $a=1-t$ with arbitrarily high probability, such that the expected
welfare loss is arbitrarily close to one.\bigskip
\end{proof}

The partial results in this section leave three open problems. First, is $%
\max \{\gamma ,1-\gamma \}$ indeed the tight upper bound on the expected
equilibrium welfare loss when $y$ is purely a function of $t$? Second, does
the upper bound of $\gamma (1-\gamma )$ obtained for complete $P$ in
Proposition \ref{prop bound linear ordering} extend to arbitrary
distributions $p$? Finally, do the results extend to general (non-simple)
data types?

\section{Consequential Actions}

So far, we focused on the extreme case in which the DM's action has a null
objective causal effect on the outcome. This facilitated the definition of
the DM's equilibrium welfare loss due to poor controls. In this section I
extend the analysis to situations in which actions do influence outcome.

Define a variable $z$ that takes values in $0$ and $1$, such that the
objective causal model behind the joint distribution over $t,x,z,a,y$ is
given by the DAG%
\[
\begin{array}{ccc}
(t,x) & \rightarrow & a \\ 
\downarrow &  & \downarrow \\ 
z & \rightarrow & y%
\end{array}%
\]%
That is, $t$ and $x$ are exogenous, as before. The action $a$ is a
consequence of $(t,x)$, via the DM types' strategies. The variable $z$ is
also a consequence of $(t,x)$, independently of $a$ (just as $y$ was in the
baseline model). The outcome $y$ is purely caused by $a$ and $z$, according
to the following conditional probability:%
\[
p(y=1\mid do(a),z)=\beta a+(1-\beta )z 
\]%
where $\beta \in (0,1)$.

This formulation implies that for every type $i$, the perceived outcome of
actions is given by%
\[
\tilde{p}_{i}(y=1\mid x_{C_{i}},do(a))=\beta a+(1-\beta )\tilde{p}%
_{i}(z=1\mid x_{C_{i}},do(a)) 
\]%
where the last term is defined just as in the baseline model:%
\[
\tilde{p}_{i}(z=1\mid x_{C_{i}},do(a))=\sum_{x_{D_{i}}}p(x_{D_{i}}\mid
x_{C_{i}})p(z=1\mid a,x_{D_{i}}) 
\]%
The type's estimated causal effect of $a$ on $z$ given $x$ is%
\[
\Delta _{i}^{z}(x)=\tilde{p}_{i}(z=1\mid x_{C_{i}},do(a=1))-\tilde{p}%
_{i}(z=1\mid x_{C_{i}},do(a=0)) 
\]%
Since $z\perp a\mid (t,x)$, the equilibrium analysis of $\Delta _{i}^{z}(x)$
and how it relates to the DM's strategy is the same as the analysis of $%
\Delta _{i}(x)$ in the baseline model.

It follows that the only thing that needs adjustment is the definition of
the DM's welfare loss. The optimal rational-expectations action maximizes%
\[
\beta a-c\cdot \mathbf{1}[a\neq t] 
\]%
because $a$ has no causal effect on $z$, such that the only effect of $a$ on 
$y$ is via the direct channel parameterized by $\beta $. Therefore, the
expected welfare loss given a joint distribution $p$ is%
\[
\gamma \cdot p(a=0\mid t=1)\cdot (c+\beta )+(1-\gamma )\cdot p(a=1\mid
t=0)\cdot (c-\beta ) 
\]%
Note that in equilibrium, the DM chooses $a=0$ at $t=1$ and $x$ only if $%
c+\beta <-(1-\beta )\Delta _{i}^{z}(x)$. Likewise, the DM chooses $a=1$ at $%
t=0$ and $x$ only if $c-\beta <(1-\beta )\Delta _{i}^{z}(x)$. Consequently,
the upper bounds on the DM's equilibrium welfare loss are the same as in
Sections 3-4, multiplied by $1-\beta $.\bigskip

\noindent \textit{An example: Partying during a pandemic}

\noindent Although the paper emphasized upper bounds on the equilibrium
costs of using bad controls for causal inference, in economic applications
we wish to\ restrict the objective process so that it can capture an
underlying economic reality. I now present a simple example of such an
application.

Suppose that $a=1$ means that the DM chooses to socially distance himself
during a pandemic --- specifically, avoiding parties. The outcome $y=1$
represents good health. Let $x$ represent the DM's age ($x=1$ indicates an
old DM). Let $t$ represent the DM's intrinsic taste for partying --- $t=1$
means that the DM prefers not to go to parties. Let $c<\frac{1}{2}$.

The objective distribution $p$ satisfies: $p(x=1)=\frac{1}{2}$; $p(t=x\mid
x)=q$ for all $x$, where $q\in (\frac{1}{2},1)$; and $p(y=1\mid a,x)=\frac{1%
}{2}(a+1-x)$. This distribution is consistent with the DAG%
\[
\begin{array}{ccc}
t & \leftarrow & x \\ 
\downarrow & \swarrow & \downarrow \\ 
a & \rightarrow & y%
\end{array}%
\]%
That is, $y$ is only caused by $a$ and $x$. When an old DM goes to parties,
his health outcome is bad with certainty; when a young DM avoids parties,
his health outcome is good with certainty; in all other cases, the DM's
health outcome is equally like to be good or bad.

Data type $1$ controls for $x$. This type correctly estimates the causal
effect of switching from $a=0$ to $a=1$ on $y$ to be $\frac{1}{2}$. Since $c<%
\frac{1}{2}$, this DM data type will rationally play $a=1$, independently of 
$t$ and $x$.

Data type $2$ does not control for $x$ (recall that even if it is obviously
natural to assume that the DM knows his age group, the DM may lack
statistics about the age dependence of the correlation between $a$ and $y$,
and therefore cannot use his knowledge of his age). This DM chooses $a$ to
maximize%
\[
p(y=1\mid a)-c\cdot \mathbf{1}[a\neq t]=\frac{1}{2}+\frac{1}{2}a-\frac{1}{2}%
p(x=1\mid a)-c\cdot \mathbf{1}[a\neq t]
\]

Let us analyze equilibria in this example. As we saw, data type $1$'s
strategy is $\sigma _{1}(a=1\mid t)=1$ for all $t,x$. Denote $\sigma
_{2}(a=1\mid t)=\alpha _{t}$ (recall that type $2$ does not condition his
action on $x$). Then,%
\begin{eqnarray*}
p(x &=&1\mid a=1)=\frac{\lambda _{1}+\lambda _{2}[q\alpha _{1}+(1-q)\alpha
_{0}]}{2\lambda _{1}+\lambda _{2}[\alpha _{1}+\alpha _{0}]} \\
p(x &=&1\mid a=0)=\frac{1-q\alpha _{1}-(1-q)\alpha _{0}}{2-\alpha
_{1}-\alpha _{0}}
\end{eqnarray*}

First, let us guess%
\[
p(x=1\mid a=1)-p(x=1\mid a=0)<\frac{1}{2}-c
\]%
Then, $a=1$ is optimal for data type $2$ regardless of $t$. In this case, we
need to consider perturbed strategies to ensure that $p(x=1\mid a=0)$ is
well-defined. Since $\alpha _{0}$ and $\alpha _{1}$ are arbitrarily close to 
$1$, we obtain $p(x=1\mid a=1)\approx \frac{1}{2}$, whereas we can set the
perturbations such that $p(x=1\mid a=0)$ can take any value in $(1-q,q)$. It
follows that it is always possible to sustain the guess in equilibrium, such
that the DM will commit no error.

Second, let us guess%
\[
p(x=1\mid a=1)-p(x=1\mid a=0)>\frac{1}{2}-c
\]%
Then, data type $2$ will play $\alpha _{t}\equiv t$ in equilibrium. Plugging
this into the expressions for $p(x=1\mid a)$, we obtain%
\[
p(x=1\mid a=1)-p(x=1\mid a=0)=\frac{\lambda _{1}+\lambda _{2}q}{2\lambda
_{1}+\lambda _{2}}-(1-q)
\]%
It follows that if%
\[
c>\frac{1}{2}-\frac{2q-1}{1+\lambda _{1}}
\]%
then the guess is consistent. In this case, there is an equilibrium in which
type $2$ follows his taste.

What sustains this equilibrium is the positive correlation between age and
preferences. Young DMs like going to parties more than old DMs, and since
the DM chooses according to his intrinsic taste with some probability ($%
\lambda _{2}$), there is positive correlation between attending parties and
young age. In turn, this soften the negative correlation between $a$ and $y$%
, to an extent that makes it optimal for type $2$ DMs to follow their taste.
The expected welfare loss in this equilibrium is%
\[
\frac{1}{2}\cdot \lambda _{2}\cdot (\frac{1}{2}-c)<\frac{(2q-1)(1-\lambda
_{1})}{2(1+\lambda _{1})}
\]

Note that the welfare loss increases with the fraction of type $2$. There
are two forces behind this observation. First, higher $\lambda _{2}$
obviously means that there are more DMs in the population who are prone to
error. Second, type $1$ DMs do not vary their behavior with $t$ (and hence
with $x$), thus curbing the overall positive correlation between $a$ and $x$
that leads type $2$ DMs to underestimate the causal effect of $a$ on $y$.
The latter effect is a beneficial \textquotedblleft \textit{equilibrium
externality}\textquotedblright\ that the sophisticated DM type exerts on the
naive type: A larger share of the sophisticated type implies that the naive
type commits a smaller error. If public health authorities could somehow
\textquotedblleft educate\textquotedblright\ part of the population to
reason better about causality, this would have a \textquotedblleft
multiplier\textquotedblright\ effect thanks to this equilibrium externality.

There is potentially a third equilibrium in which $\alpha _{1}=1$ and $%
\alpha _{0}\in (0,1)$, such that%
\[
p(x=1\mid a=1)-p(x=1\mid a=0)=\frac{1}{2}-c 
\]%
For brevity, I omit the full characterization of this equilibrium.

\section{Relation to Other Solution Concepts}

The model of behavioral causal inference presented in this paper poses a new
question. However, it can be formulated by adapting existing frameworks of
equilibrium modeling with non-rational expectations.

Jehiel's (2005) concept of analogy-based expectations equilibrium (ABEE)
captures the idea that players' perception of other players' strategies is
coarse. In the present context, we can regard $y$ as the action taken by a
fictitious opponent of the DM after observing the history $%
(a,t,x_{1},...,x_{n})$. In this context, $C_{i}$ defines type $i$'s
information set, whereas $D_{i}$ defines type $i$'s \textquotedblleft
analogy partition\textquotedblright\ (to use Jehiel's terminology). Two
histories belong to the same cell in this partition if they share the same
value of $x_{D_{i}}$. My definition of equilibrium is consistent with
Jehiel's assumption that type $i$ believes that the fictitious player's
strategy is measurable with respect to type $i$'s analogy partition. Thus,
the model in this paper can be formulated as an application of ABEE.

The model can also be cast in the Bayesian-network language of Spiegler
(2016). The objective distribution $p$ in the baseline model (where $a$ has
no causal effect on $y$) is consistent with the following DAG:%
\[
\begin{array}{ccccc}
a & \leftarrow & t & \rightarrow & y \\ 
& \nwarrow & \uparrow & \nearrow &  \\ 
&  & x &  & 
\end{array}%
\]%
Using the DAG language, the distinction between data types in the present
model can be redefined in terms of subjective causal models. Specifically,
type $i$ believes in a causal model that involving the variables on which he
has data, and is given by the following DAG:

\[
\begin{array}{ccc}
x_{D_{i}\setminus C_{i}} & \longrightarrow & y \\ 
\uparrow & \nearrow & \uparrow \\ 
x_{C_{i}} & \longrightarrow & a%
\end{array}%
\]%
According to Spiegler (2016), the subjective belief that this model
generates obeys the Bayesian-network factorization formula%
\[
p(x_{C_{i}})p(x_{D_{i}\setminus C_{i}}\mid x_{C_{i}})p(a\mid
x_{C_{i}})p(y\mid a,x_{C_{i}},x_{D_{i}}) 
\]%
The DM's conditional belief over $y$ as a consequence of $a$ given $%
x_{C_{i}} $ is described by (\ref{subjective belief}). Equilibrium in the
present model is consistent with the notion of personal equilibrium in
Spiegler (2016,2020) when the DM's subjective causal model is random.

The Bayesian-network framework in Spiegler (2016) can be subsumed into the
more general concept of Berk-Nash equilibrium due to Esponda and Pouzo
(2016). According to this concept, the DM best-replies to a conditional
belief (over outcomes given actions and signals), which minimizes a weighted
version of Kullback-Leibler divergence with respect to the objective
conditional distribution. The proper adaptation of this concept to the
present context requires the weights to be given by the DM's ex-ante
equilibrium strategy.

The fact that the present model can be reformulated using existing
frameworks means that fundamentally, it does not present a methodological
innovation. The reason I nevertheless chose to present it in a different
language is twofold. First, this mode of exposition makes it self-contained
and therefore easier to follow for readers who may not know the previous
frameworks. Second, by drawing a connection with the familiar notion of
\textquotedblleft bad controls\textquotedblright , this paper will hopefully
help inspiring new questions about how everyday decision makers perform
causal inference.

\section{Conclusion}

When DMs draw causal inferences from observed correlations, they may commit
errors if they fail to control for an appropriate set of confounding
variables. This paper examined a model of this error, when DM types differ
in their sets of control variables. The main theme of the paper was that
since DMs' causal inferences drive how they condition their actions on their
signals, and since this response in turn shapes the very correlations from
which DMs draw their inferences, equilibrium analysis is required in order
to evaluate the DM's expected decision cost of erroneous causal inference
due to poor controls.

The main general insight that emerged from this analysis was that the upper
bound on this decision cost depends on whether DM types are differentiated
\textquotedblleft vertically\textquotedblright\ or \textquotedblleft
horizontally\textquotedblright . In the former case, types can be partially
ordered according in some sense according to the size of their control
variables. In this case, the equilibrium cost of bad controls is
significantly lower than the non-equilibrium benchmark, and sometimes it is
null. In the former case, types control for different variables, which can
give rise to mutually reinforcing confounding patterns, such that the
maximal equilibrium decision cost is significantly higher than in the former
case, and sometimes coincides with the non-equilibrium benchmark.\bigskip

\end{document}